\newtheorem{theorem}{Theorem}
\newtheorem{corollary}{Corollary}
\newtheorem{lemma}{Lemma}
\title{Hamiltonian dynamics of several rigid bodies\\
interacting point vortices}
\author{Steffen Wei{\ss}mann\thanks{Institut f\"{u}r Mathematik, TU Berlin.
Email: steffen.weissmann@mail.de}}
\begin{document}

\maketitle

\begin{abstract}
We derive the dynamics of several rigid bodies of arbitrary shape in
a 2--dimensional inviscid and incompressible fluid, whose vorticity field is
given by point vortices. We adopt the idea of \citet{Vankerschaver2009} to
derive the Hamiltonian formulation via symplectic reduction from a canonical
Hamiltonian system. The reduced system is described by a non-canonical
symplectic form, which has previously been derived for a single, circular disk
using heavy differential-geometric machinery in an infinite-dimensional
setting.
In contrast, our derivation makes use of the fact that the dynamics of the
fluid, and thus the point vortex dynamics, is determined from first principles.
Using this knowledge we can directly determine the dynamics on the reduced,
finite-dimensional phase space, using only classical mechanics.
Furthermore, our approach easily handles several bodies of arbitrary shapes.
From the Hamiltonian description we derive a Lagrangian formulation, which
enables the system for variational time integrators. We briefly describe how
to implement such a numerical scheme and simulate different configurations for
validation.
\end{abstract}

\section{Introduction}
The Hamiltonian dynamics of a single rigid body of arbitrary shape in a
2--dimensional inviscid and incompressible fluid interacting with $n$ point
vortices has first been formulated by \citet{Shashikanth2005}. The system was
also studied by \citet{Borisov2007}, dropping the restriction of zero
circulation around the cylinder. Conceptually, both works use a momentum balance
approach to derive the equations of motion, i.e., changes in fluid momentum are
compensated by the body. This approach is inherently restricted to a single
rigid body, since it is not clear how to distribute changes in fluid momentum
over several bodies.\\
A different approach was taken by \citet{Vankerschaver2009}, who derived the
dynamics for the case of a single circular disk by considering the dynamics as
geodesics on a Riemannian manifold, in the spirit of Arnold's geometric
description of fluid dynamics \citep{Arnold1966}. The manifold here is the
Cartesian product of $\SE(2)$ with a subset of volume-preserving embeddings of
the initial fluid configuration into $\R^2$, compatible with the time-dependent
pose of the body. The Riemannian metric is given by the kinetic energy.
When reducing the system to fluid velocity fields generated by point vortices,
one obtains a finite-dimensional phase space with magnetic symplectic form,
which yields the coupling between rigid body and point vortex motion.
While in principle it is possible to extend this to several bodies of arbitrary
shape, the derivation is challenging and requires heavy differential-geometric
machinery in an infinite-dimensional setting: One has to determine the curvature
of the mechanical connection on unreduced phase space, which is already
challenging for a single, circular disk.\\
Our derivation makes use of the fact that the dynamics of the fluid, and thus
the point vortex dynamics, is already known. Using this knowledge we can
directly determine the dynamics on the reduced, finite-dimensional phase space,
using only classical mechanics. The derivation readily handles the case of
several bodies of arbitrary shape.

The system that we study here can be viewed as the superposition of two simpler
and well-understood systems: Point vortex dynamics in the plane, and rigid body
dynamics in potential flow. In fact, as we will show later, at large distance
the two systems evolve independently.\\
The study of point vortex dynamics dates back to the seminal work by
\citet{Helmholtz1858}. Since then it has been an active area of research, see,
for instance, \citet{Saffman1992,Newton2001}. Apart from being a rich source for
mathematical research \citep{Aref2007}, point vortices are of great interest for
numerical simulation of fluid flow since \citet{Chorin1973}, supported by
strong analytical results \citep{Majda2002}.
The dynamics is governed by a Hamiltonian system which is non-canonical in the sense
that point vortex positions are already points in phase space. Physically, this
means that one cannot assign an initial velocity or momentum to the vortices,
their motion is determined completely from fluid dynamics.\\
The dynamics of several rigid bodies in potential flow (i.e., no vorticity) has
been studied by \citet{Nair2007}. Their work is based on \citet{Lamb1895},
also \cite{Milne1996} provides an extensive treatment of fluid-body interaction.
\cite{Kirchhoff1870} was the first to discover that the kinetic energy of a
surrounding potential flow can be incorporated into the kinetic energy of rigid
motion as \emph{added mass}. In contrast to the case of a single rigid body, the
kinetic energy of potential flow around several rigidly moving obstacles is no
longer a constant quadratic form on body velocities, but depends on the relative
poses of the different bodies. Still, the dynamics of this system is Hamiltonian
in a canonical way:
The kinetic energy defines a Riemannian metric on the configuration space, and
geodesics solve Hamilton's equations with respect to the canonical symplectic
form on the cotangent bundle, and kinetic energy as the Hamiltonian.

In this paper we introduce the Hamiltonian dynamics of several rigid bodies
interacting with point vortices, for the case of zero circulation around the
individual bodies, but arbitrary strengths of the point vortices.
The dynamics of this system has been known only for the case of a single rigid
body. In order to derive the equations of motion we adopt the description of the
reduced phase space from \citet{Vankerschaver2009} and extend it to the case of
several rigid bodies.
On the reduced phase space we determine the magnetic symplectic form directly,
using only general properties of the magnetic symplectic form, and first
principles of fluid dynamics.
From the Hamiltonian formulation we give a Lagrangian description of the system,
which enables the system for variational integrators \citep{Marsden2001}.\\
From the smooth Lagrangian description we briefly describe how to construct a
numerical scheme to simulate the time evolution of the system. The Lagrangian
here is degenerate, so the system fits into the framework of variational
integrators for degenerate Lagrangian systems \citep{Rowley2002}. We develop a
variational time integrator which captures the qualitative behavior of the
dynamics over long simulation times, has excellent energy behavior, and
preserves momentum and symplectic structure exactly. For validation we apply our
method to some integrable and chaotic configurations.

\section{Physical Model}

\subsection{Rigid Bodies}
\label{sec:rigid-bodies}

The motion of a rigid body is described by a time-dependent Euclidean
transformation
\begin{align*}
g \colon\, z & \mapsto R z + y, & R & =
\mat{\cos \theta & - \sin \theta \\ \sin \theta & \cos \theta},
\end{align*}
where $R \in \SO(2)$ is a $2 \times 2$-rotation matrix, $\theta \in \left[ 0, 2
\pi \right)$ specifies the angle of rotation, and $y = (y_1, y_2) \in
\R^2$ describes the location of the center of the body. It is convenient to
identify Euclidean transformations with $3 \times 3$-matrices, acting on homogeneous
vectors:
\begin{align}
\label{eqn:matrix-identification-g}
g \colon \quad z & \mapsto R z + y \quad \longleftrightarrow
\quad
\mat{R & y \\ 0 & 1} \colon \, \vec{z\\1} \mapsto \vec{R z + y \\ 1}.
\end{align}
Concatenation of Euclidean transformations becomes matrix multiplication in this
representation. The time derivative of $g$ can be expressed as
\begin{align}
\label{eqn:time-derivative-g}
\dot{g} & = \mat{\dot{R} & \dot{y} \\ 0 & 0} = \underbrace{\mat{R & y \\ 0 &
1}}_g \underbrace{\mat{\Omega \times & V \\ 0 & 0}}_\Xi, & \Omega \times & =
\mat{0 & -\Omega
\\
\Omega & 0},
\end{align}
where we have denoted the angular velocity by $\Omega = \dot{\theta}$ and
the linear velocity by $V = R^t \dot{y}$. The matrix $\Omega \times$ acts
as a stretched $90^\circ$ rotation, i.e., as cross-product with $\Omega e_3$.
Here $g$ transforms the velocity field
\begin{align}
\label{eqn:body-velocity}
\Xi \colon \quad z \mapsto \Omega \times z + V \quad \longleftrightarrow \quad
\mat{\Omega \times & V \\ 0 & 0} \colon \, \vec{z\\1} \mapsto \vec{\Omega \times
z + V \\ 1}
\end{align}
which we can also identify with $3 \times 3$-matrices. We call $\Xi$
the \emph{body velocity}, it represents the instantaneous velocity field
expressed in the body frame. By a change of variables (through conjugation with
$g$) we obtain from $\Xi$ the \emph{spatial velocity} $\xi$:
\begin{align*}
\xi = g \, \Xi \, g^{-1}
 = \mat{\Omega \times & R V + y \times \Omega \\ 0 & 0}
=: \mat{\omega \times & v \\ 0 & 0}.
\end{align*}
We will also identify body/spatial velocity with 3--vectors of angular and
linear velocity components: $\Xi = (\Omega, V)$, $\xi = (\omega, v)$.
In this representation, velocity conversion between body and spatial frame
corresponds to matrix multiplication:
\begin{align}
\label{eqn:adjoint-operator}
\xi & = g \, \Xi \, g^{-1} = \Ad{g} \, \Xi, & \Ad{g} & = \mat{I & 0 \\ y
\times & R}.
\end{align}
The kinetic energy of a moving rigid body is
\begin{align*}
T & = \frac{m}{2} \norm{V}^2 + \frac{\imath}{2} \norm{\Omega}^2,
\end{align*}
where $m$ is the body mass and $\imath$ the body's moment of inertia, i.e., its
resistance against changes in angular velocity. We can write $T$ as
\begin{align}
\label{eqn:rigid-body-kinetic-energy}
T & = \frac{1}{2} \dd{\MI \, \Xi, \Xi} = \frac12 \dd{M, \Xi}, & 
\text{ where } & & \MI & = \mat{\imath & 0 \\ 0 & m I}
\end{align}
is the \emph{mass-inertia tensor} of the body, mapping body velocity
$\Xi=(\Omega, V)$ to body momentum $M = (A, L)$. $A$ and $L$ denote
angular and linear momentum, respectively. As for velocity we can also express
momentum in the spatial frame, i.e., the spatial momentum $m = (a, \ell)$
satisfies $\tdd{m, \xi} = \tdd{M, \Xi}$. This gives
\begin{align}
\label{eqn:spatial-momentum}
m = \CoAd{g^{-1}} M = \mat{I & (y \times) R \\ 0 & R} \vec{A \\ L} = \vec{A
+ y \times R L \\ R L} =: \vec{a \\ \ell},
\end{align}
where $\CoAd{g^{-1}}$ denotes the matrix transpose of $\Ad{g^{-1}}$,
defined in Equation~\eqref{eqn:adjoint-operator}.

The space of rigid motions forms the Lie group $\SE(2)$. Using the
representation in terms of $3 \times 3$-matrices, the group law (i.e.,
concatenation of Euclidean transformations), is just matrix multiplication.
The tangent space $T_g \SE(2)$ at $g$ consists of elements of the form $\delta g
= g \Gamma$, where $\Gamma$ is a $3 \times 3$ matrix representing the body
velocity field $\Gamma: z \mapsto \Lambda \times z + U$, see
Equation~\eqref{eqn:body-velocity}. The space of such matrices (or velocity
vector fields) is the Lie algebra $\se(2)$, which we identify with $\R^3$
through $\Gamma = (\Lambda, U)$.\\
In order to express the equations of motion of rigid bodies and point vortices
we need the notion of the \emph{left gradient} of a scalar function $f(g)$.
The differential $D_g f$ is a linear map of tangent vectors $\delta g = g
\Gamma$ to the real numbers. It follows that $D_g f$ is also linear in $\Gamma$,
and we define the left gradient $\lgrad f(g)$ as the 3-vector which satisfies
\begin{align}
\label{eqn:left-gradient}
\dd{\lgrad f(g), \Gamma} = D_g f(\delta g).
\end{align}

\subsection{Fluid Configuration}

The time-dependent fluid domain $\Fld$ is covered with a fluid which is at rest
at infinity and whose motion is given by a time-dependent fluid velocity field
$u$. Its vorticity field $\omega = \curl u$ is zero everywhere, except for
isolated \emph{point vortices} $\gamma = \{\gamma_1,\ldots, \gamma_m\}$. There
the vorticity field is concentrated in a delta-function-like manner. The
circulation around each vortex is constant in time (due to Kelvin's circulation
theorem) and measures the \emph{strength} $K_i$ of the vortex $\gamma_i$.
We assume zero circulation around the individual bodies, and impose no-through
boundary conditions. That is, the normal component of the velocity field must
coincide with the body boundary normal velocity, while the tangent velocity is
arbitrary:
\begin{align}
\label{eqn:motion-boundary-condition}
\dd{u(z), n_j(z)} & = \dd{\xi_j(z), n_j(z)} = \dd{\omega_j \times z + v_j,
n_j(z)} \text{ for } z \in \partial {\Bdy}_j.
\end{align}
Here $\xi_j = (\omega_j, v_j)$ denotes the velocity field of $\Bdy_j$'s motion
in the spatial frame of reference, and $n_j$ is the normal vector field along $\partial
\Bdy_j$, also in the spatial frame.

\subsubsection{Hodge-Helmholtz Decomposition}

\begin{figure}
\centering
\includegraphics[width=0.4\columnwidth]{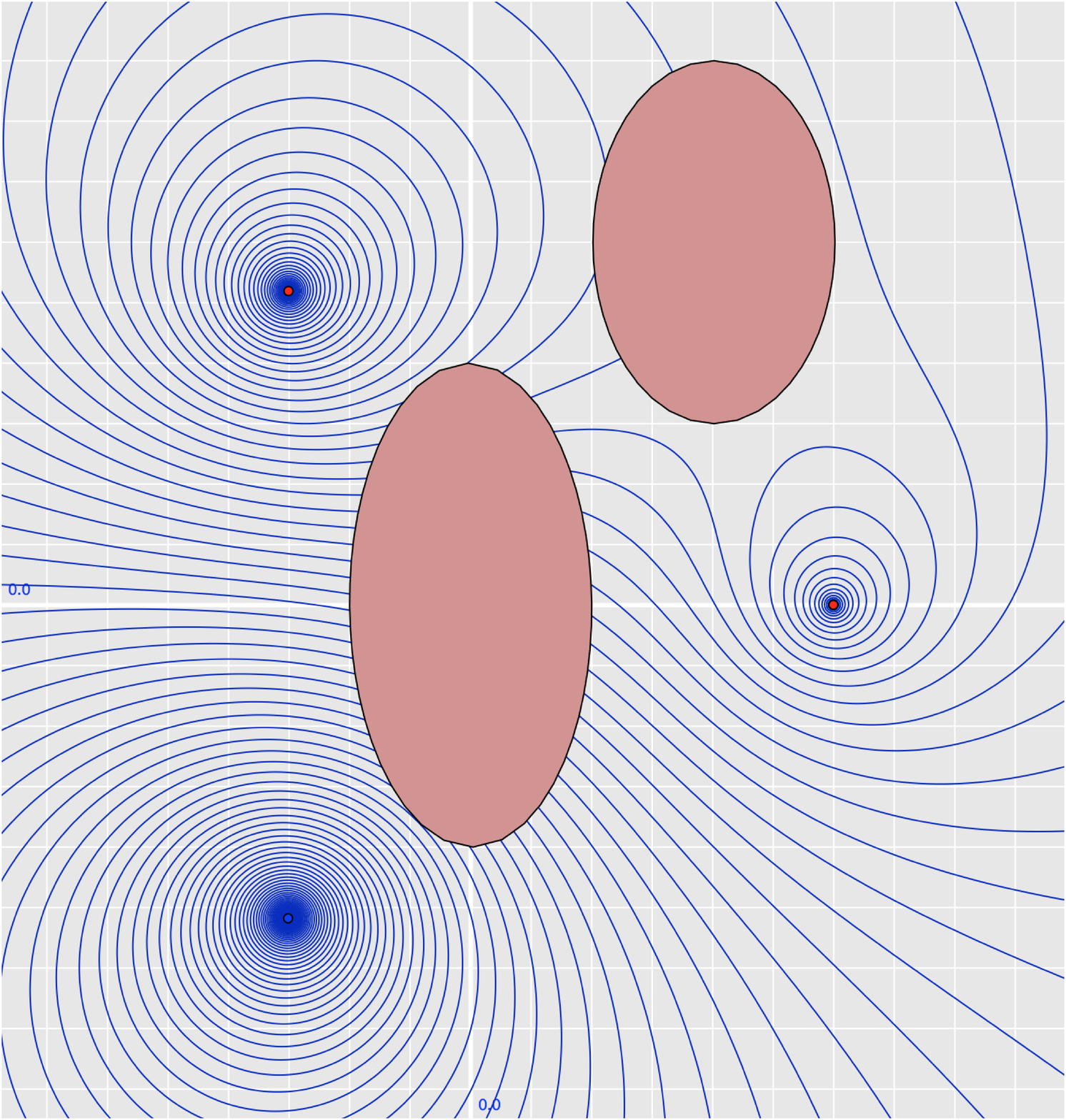}\,\,\,
\includegraphics[width=0.4\columnwidth]{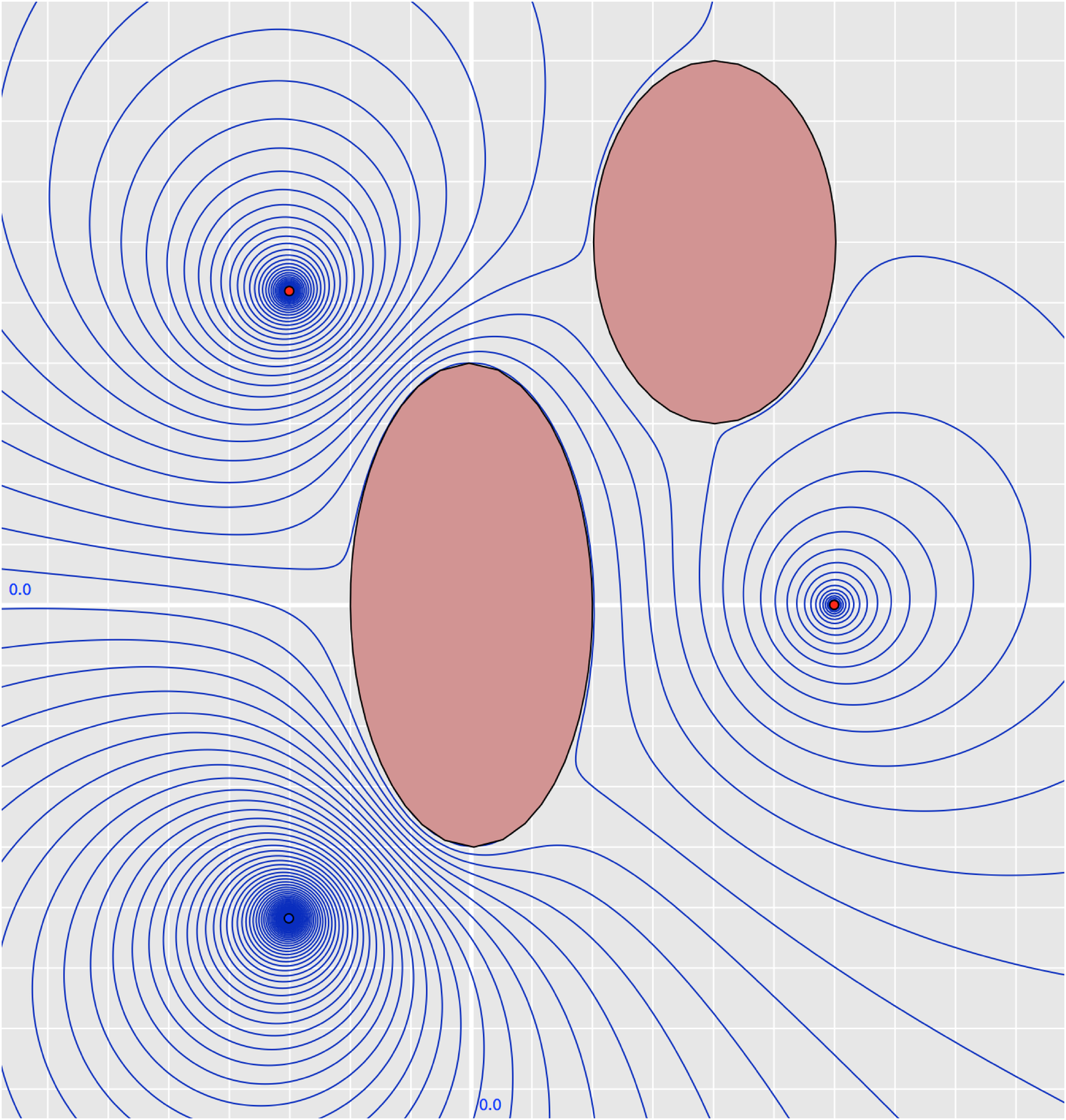}
\caption{Two bodies surrounded by three point vortices with strengths
$1$, $2$, $-3$. The instantaneous fluid motion is along the stream lines (blue),
with velocity proportional to the level density.
\emph{Left:} Stream lines of the velocity field $u_\gamma$ generated by the point
vortices, ignoring the bodies.
\emph{Right:} Stream lines of the velocity field $u_\| = u_\gamma + u_I$. Here the image
vorticity makes the fluid flow nicely around the bodies.}
\label{fig:image_vorticity}
\end{figure}

We will now construct the fluid velocity field $u$ for a given configuration
$(g, \xi, \gamma)$ of $m$ bodies and $n$ point vortices. Here, $g$ and $\xi$
contain the individual body poses $g_j$ and motion states $\xi_j$, and
$\gamma$ encodes the $m$ point vortex positions $\gamma_i$. In the
absence of bodies, the fluid velocity field whose vorticity is given by the
point vortices $\gamma$ with strengths $K$ is determined by the Biot-Savart law:
\begin{equation}
\label{eqn:biot-savart}
u_\gamma(z) = \sum_i K_i \left( J \frac{\gamma_i - z}{\norm{\gamma_i - z}^2}
\right), \qquad J = \mat{0 & -1\\1 & 0}.
\end{equation}
When bodies are present, the velocity field $u_\gamma$ makes fluid particles
move across the body boundaries, see Figure~\ref{fig:image_vorticity}, left.
To fix this, we construct a potential field $u_I = \grad \phi_I$ on $\Fld$
which compensates the normal flux of $u_\gamma$ through the body boundaries,
thus satisfying the boundary condition
\begin{align}
\label{eqn:flux-boundary-condition}
\dd{u_I(z), n_j(z)} & = - \dd{u_\gamma(z), n_j(z)} , \text{ for } z \in \partial {\Bdy}_j.
\end{align}
The subscript $I$ reflects the fact that $u_I$ can be represented as \emph{image
vorticity} inside of the bodies or on their boundaries, see
\citet[\S 2.4]{Saffman1992}. The potential $\phi_I$ of $u_I$ is
uniquely determined by the Neumann problem
\begin{equation}
\label{eqn:image-neumann-problem}
\begin{aligned}
\pard{\phi_I}{n}(z) & = - \dd{u_\gamma(z), n(z)},  \text{ for } z \in
\partial \Fld, & \Delta \phi_I(z) & = 0, 
& \lim_{z \rightarrow \infty} u_I(z) & = 0.
\end{aligned}
\end{equation}
The superposition $u_\| = u_\gamma + u_I$ satisfies the boundary condition
$\tdd{u_\|(z), n(z)} = 0$ on $\partial \Fld$, see Figure~\ref{fig:image_vorticity}, right. In
other words, it is the correct fluid velocity field as long as the bodies are at rest.

When the bodies move we achieve boundary
condition~\eqref{eqn:motion-boundary-condition} by adding another potential
field $u_{\Bdy} = \grad \phi_{\Bdy}$, obtained from the Neumann problem
\begin{equation}
\label{eqn:motion-neumann-problem}
\begin{aligned}
\pard{\phi_\Bdy}{n}(z) & = \dd{
\omega_j \times z + v_j
, n_j(z)}, \text{ for } z \in \partial \Bdy_j, & \Delta
\phi_\Bdy(z) & = 0, & \lim_{z \rightarrow \infty} u_\Bdy(z) & = 0.
\end{aligned}
\end{equation}
The superposition $u = u_\gamma + u_I + u_\Bdy$ is the unique fluid velocity
field which satisfies boundary condition~\eqref{eqn:motion-boundary-condition},
has zero circulation around the individual bodies, vanishes at infinity, and
has its' vorticity field is given by the point vortices $\gamma_i$ with
strengths $K_i$.

The velocity potential $\phi_\Bdy$ depends linearly on body velocities
$\xi_j = (\omega_j, v_j)$, due to the linearity of the Neumann problem. Because
of~\eqref{eqn:adjoint-operator} it also depends linearly on $\Xi_j = (\Omega_j,
V_j)$, i.e., on velocity in the body frame. We use the notation $\Phi_\Bdy$
and $\Phi_{\Bdy_j}$ for the corresponding vector-valued potential in body
coordinates:
\begin{align*}
\Phi_{\Bdy_j}(z) & = (\phi_\Bdy^{3 j}(z), \phi_\Bdy^{3 j + 1}(z), \phi_\Bdy^{3 j
+ 2}(z)) \in \R^3, &
\Phi_\Bdy(z) & = (\phi_\Bdy^1(z), \ldots, \phi_\Bdy^{3 m}(z)) \in \R^{3 m}.
\end{align*}
Then we can write $\phi_\Bdy$, using the standard inner product, as
\begin{align}
\label{eqn:linear-body-phi}
\phi_\Bdy(z) & = \sum_j \dd{\Phi_{\Bdy_j}(z), \Xi_j} = \dd{\Phi_\Bdy(z), \Xi}.
\end{align}
Equivalently we can represent the velocity fields $u_I$ and $u_\Bdy$ in terms of
their stream functions $\psi_I$ and $\psi_\Bdy$. That is:
\begin{align*}
u = \grad \phi = J \grad \psi.
\end{align*}
As for the potential $\phi_\Bdy$, also $\psi_\Bdy$ depends linearly
on body velocity. In analogy to~\eqref{eqn:linear-body-phi} we denote the
vector-valued stream functions by $\Psi_\Bdy$ and $\Psi_{\Bdy_j}$:
\begin{align}
\label{eqn:linear-body-psi}
\psi_\Bdy(z) & = \sum_j \dd{\Psi_{\Bdy_j}(z), \Xi_j} = \dd{\Psi_\Bdy(z), \Xi}.
\end{align}
This representation is important since kinetic energy and the equations of
motion are most easily expressed using stream functions.

\subsection{Kinetic Energy}

The kinetic energy of the fluid velocity field $u$ is
\begin{equation*}
\frac12 \int_{\Fld} \norm{u}^2 = \frac12 \int_{\Fld} \norm{u_\|}^2 + \frac12 \int_{\Fld} \norm{u_{\Bdy}}^2,
\end{equation*}
since $u_\|$ and $u_{\Bdy}$ are $L^2$-orthogonal. The kinetic energy associated
to $u_{\Bdy}$ can be expressed in terms of \emph{added mass} as a quadratic form
on body velocity, similar to the kinetic
energy~\eqref{eqn:rigid-body-kinetic-energy}:
\begin{align}
\label{eqn:kirchhoff-energy}
\frac12 \int_{\Fld} \norm{u_{\Bdy}}^2 = \frac12 \dd{\K(g) \, \Xi, \Xi}.
\end{align}
The matrix $\K(g)$ is called the \emph{added mass tensor}, and the main
difference to the mass-inertia tensor $\MI$ is that $\K(g)$ depends on $g$,
while $\MI$ is a constant matrix which depends only on the body shapes and mass
distributions. For a single rigid body the derivation of
\eqref{eqn:kirchhoff-energy} goes back to \citet{Kirchhoff1870}, see
\citet{Nair2007} for the case of several rigid bodies.
Together with the kinetic energy associated to the rigid body motion, we define
the \emph{Kirchhoff tensor} $\KT = \MI+\K(g)$. The corresponding energy,
\begin{align}
\label{eqn:kinetic-energy-hydrodynamically-coupled}
T_\Bdy = \frac12 \dd{\KT \, \Xi, \Xi},
\end{align}
can be viewed as a function of Kirchhoff momentum $M = \KT \Xi$. Then
\begin{align}
\label{eqn:hydrodynamically-coupled-hamiltonian}
H_\Bdy = \frac12 \dd{M, \KT^{-1} M}
\end{align}
is the Hamiltonian for rigid bodies in potential flow.

The kinetic energy associated to $u_\|$ contains infinite self-energy terms that
we ignore.\footnote{This is a special feature of the 2D case, where a single
point vortex in an unbounded fluid domain will not move at all due to symmetry.
In 3D, when considering vortex filaments, the self-energy has significant influence on
the dynamics and cannot be ignored.} The finite part is given as the negative of the
\emph{Kirchhoff-Routh function} $W_G$ \citep{Lin1941,Shashikanth2005}:
\begin{equation}
\label{eqn:kirchhoff-routh-function}
W_G = \frac12 \sum_i K_i \left( \psi_I(\gamma_i) + \sum_{i \neq j} \psi_{\gamma_j}(\gamma_i) \right),
\end{equation}
where $\psi_{\gamma_j}(z) = - K_j \log{\norm{\gamma_j - z}}$ is the stream function
of the point vortex $\gamma_j$. In the absence of bodies (or other boundaries)
the kinetic energy of $u_\| = u_\gamma$ reduces to
\begin{equation}
\label{eqn:point-vortex-hamiltonian}
H_\gamma = - \frac12 \sum_i K_i \sum_{j \neq i} \psi_{\gamma_j}(\gamma_i),
\end{equation}
which is the Hamiltonian of $m$ point vortices in the plane. Note that gradient
of $W_G$ with respect to a point vortex $\gamma_i$ encodes the velocity field
$u_\|(\gamma_j)$:
\begin{align}
\label{eqn:energy-variation-gamma}
\grad_{\gamma_i} W_G = - K_i J u_\|(\gamma_i).
\end{align}
The total kinetic energy of the coupled system is
\begin{equation}
\label{eqn:kinetic-energy}
H = H_\Bdy - W_G = T_\Bdy - W_G.
\end{equation}

\section{Equations of Motion}

We consider $m$ rigid bodies (topological disks) in the plane, surrounded by an
inviscid and incompressible fluid. The circulation around the
individual bodies is zero, and the whole vorticity of the fluid is concentrated
at $n$ isolated point vortices $\gamma_i \in \R^2$, with strengths
$K_i$. Euclidean transformations and velocity states of the
different bodies are denoted by $g_j$ and $\Xi_j = (\Omega_j, V_j)$, as in
Section~\ref{sec:rigid-bodies}. We define\footnote{The physical meaning of these
quantities will be discussed in Sections~\ref{sec:point-vortex-dynamics}
and~\ref{sec:coupled-dynamics}.}
\begin{align*}
M_{C} & = \sum_i K_i \Psi_\Bdy(\gamma_i), & W_\Bdy & = \sum_i K_i
\psi_\Bdy(\gamma_i) = \dd{M_C, \Xi},
\end{align*}
and the generalized momentum $M = \KT \, \Xi + M_C$ with components $M_j =
(A_j, L_j)$ corresponding to angular and linear momentum.\\

\begin{theorem}
\label{thm:equations-of-motion}
The motion of the coupled system described above is governed by the following
system of differential equations:
\begin{equation}
\label{eqn:equations-of-motion}
\begin{aligned}
\vec{\dot{A}_j + V_j \times L_j \\[0.05cm] \dot{L}_j + \Omega_j \times
L_j} & = - \Big( \lgrad (H_\Bdy - W_G - W_\Bdy) \Big)_j,\\[0.1cm]
\Xi & = \KT^{-1} (M - M_C), \\[0.1cm]
 \dot{g}_j & = g_j
\Xi_j,\\[0.1cm] \dot{\gamma}_i & = u(\gamma_i).
\end{aligned}
\end{equation}
\end{theorem}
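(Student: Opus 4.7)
Three of the four equations in~\eqref{eqn:equations-of-motion} are essentially definitional or kinematic. The constitutive relation $\Xi = \KT^{-1}(M - M_C)$ is just the stipulation $M = \KT\,\Xi + M_C$ rearranged; the identity $\dot g_j = g_j\,\Xi_j$ is the reconstruction equation~\eqref{eqn:time-derivative-g} defining $\Xi_j$ as the left-trivialized velocity of $g_j$; and $\dot\gamma_i = u(\gamma_i)$ is the classical Helmholtz--Kirchhoff advection law applied to the total field $u = u_\gamma + u_I + u_\Bdy$ assembled in Section~2 (with the standard exclusion of the logarithmic self-singularity at $\gamma_i$). My plan therefore concentrates on the body-momentum equation.

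\textbf{Framework.} I would treat the coupled system as a Hamiltonian system on the reduced finite-dimensional phase space $\SE(2)^m \times (\se(2)^*)^m \times \R^{2n}$ with coordinates $(g,M,\gamma)$, Hamiltonian equal to the total kinetic energy $H$ as in~\eqref{eqn:kinetic-energy}, and a non-canonical \emph{magnetic} symplectic form combining the canonical form on each $T^*\SE(2)$ factor, the point-vortex form $\sum_i K_i\,d\gamma_i^1\wedge d\gamma_i^2$, and a cross-coupling built from $\Psi_\Bdy$. On each $T^*\SE(2)$ factor, Hamilton's equations in body coordinates take the Euler--Poincar\'e form, in which the coadjoint term $\mathrm{ad}^*_{\Xi_j}M_j$ supplies exactly the cross-product pieces $V_j\times L_j$ and $\Omega_j\times L_j$ on the left-hand side of~\eqref{eqn:equations-of-motion}.

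\textbf{Translating the gradient.} The right-hand side of the theorem is written as $-\lgrad(H_\Bdy - W_G - W_\Bdy)$, naturally evaluated at fixed $\Xi$, while Hamilton's equations produce $-\lgrad H$ evaluated at fixed $M$. I would reconcile the two by a chain-rule computation using $\Xi = \KT^{-1}(M-M_C)$ and the symmetry of $\KT$, which schematically gives
\begin{equation*}
\lgrad_g T_\Bdy\big|_M \;=\; -\lgrad_g T_\Bdy\big|_\Xi \;-\; \lgrad_g W_\Bdy\big|_\Xi,
\end{equation*}
where $W_\Bdy = \dd{M_C,\Xi}$ absorbs precisely the contribution of $\lgrad_g M_C$. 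Combined with the $\Xi$-independence of $W_G$, this delivers the stated right-hand side after matching signs with the Euler--Poincar\'e form of the previous step.

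\textbf{Main obstacle.} The delicate point, and the place where the approach diverges from that of~\citet{Vankerschaver2009}, is pinning down the magnetic 2-form on the reduced phase space \emph{without} performing an infinite-dimensional symplectic reduction. My plan is to postulate the most general $\SE(2)^m$-equivariant magnetic 2-form built from the canonical pieces plus an unknown $\gamma$--$g$ coupling involving $\Psi_\Bdy$, and then to use the a priori known vortex-advection law $\dot\gamma_i = u(\gamma_i)$ together with the vortex-free Kirchhoff limit to fix this coupling uniquely. Verifying closedness of the resulting 2-form is the step I expect to cost the most effort.
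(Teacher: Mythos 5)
Your plan follows essentially the same route as the paper: work on the reduced phase space $T^*\SE(2)^n\times\R^{2m}$, accept the structural form $\sigma=\sigma_{can}+\d\alpha+\sigma_\gamma$, pin down the magnetic term from the a priori known advection law $\dot\gamma_i=u(\gamma_i)$ together with a limiting normalization, and then read off the body equation in Euler--Poincar\'e form, the term $\coad{\Xi_j}M_j$ supplying the cross products and the shift $M=\KT\Xi+M_C$ supplying the $W_\Bdy$ contribution on the right-hand side. One remark on efficiency: the step you single out as most costly --- verifying closedness of a postulated equivariant magnetic $2$-form --- is unnecessary. The reduction theorem already tells you the magnetic term is \emph{exact}, $\d\alpha$ for a one-form on $\SE(2)^n\times\R^{2m}$, so the paper parametrizes the primitive as $\alpha=\tdd{M_\alpha(g,\gamma),\Xi}+\tdd{M_\gamma(g,\gamma),\dot\gamma}$ (the second summand is killed by evaluating at $M=0$, hence $\Xi=0$), and closedness is automatic. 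Matching Hamilton's equations against Helmholtz' law then yields $\grad_\gamma\tdd{M_\alpha,\Xi}=\grad_\gamma\tdd{M_C,\Xi}$, i.e.\ $M_\alpha=M_C+\tilde M(g)$; the $W_\Bdy$ term and the replacement $M\mapsto M+M_C$ in the momentum equation then both fall out of expanding $\d\alpha$ as in~\eqref{eqn:dAlpha2}, rather than from the chain-rule identity you write (which, as stated with $T_\Bdy$ on the left, conflates $T_\Bdy$ at fixed $M$ with $H_\Bdy$).

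The genuine gap is your normalization of the residual piece by a ``vortex-free Kirchhoff limit.'' The reduction is performed at a fixed value of the momentum map, i.e.\ at fixed vortex strengths $K_i$, and the undetermined part $\tilde\alpha=\tdd{\tilde M(g),\Xi}$ may itself depend on $K$; knowing that the $K=0$ system is canonical says nothing about $\d\tilde\alpha$ for the system at hand unless you separately establish linearity or continuity of the magnetic term in $K$ --- which is precisely the infinite-dimensional curvature computation the whole approach is designed to avoid. The paper instead stays inside the fixed system and sends the \emph{distance} $d$ between the body cluster and the vortex cluster to infinity: Lemma~\ref{lemma:falloff-stream-functions} (single-layer representation with vanishing total density, giving $\psi_I,\psi_\Bdy,\delta_g\psi_I,\delta_g\psi_\Bdy={\mathcal O}(|z|^{-1})$) shows that both the coupling terms in the Hamiltonian and $\d\hat\alpha$ with $\hat\alpha=\tdd{M_C,\Xi}$ vanish in this limit, so $\d\alpha\to 0$ (Lemma~\ref{lemma:decoupling}); since $\d\tilde\alpha$ is independent of $\gamma$ and hence of $d$, it must vanish identically. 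You need to supply these decay estimates (or an equivalent decoupling argument within the fixed-vorticity system); without them the $\gamma$-independent part of the magnetic term is undetermined and the right-hand side of your momentum equation could differ from the stated one by an arbitrary exact two-form on $\SE(2)^n$.
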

This theorem will be proven in the remainder of this section.

\subsection{Hamiltonian Formulation}

The starting point for our derivation was made by \cite{Vankerschaver2009}, who
determined the dynamics of a single, circular disk with point vortices through
the framework of cotangent bundle reduction \citep{Marsden2007}.
Appendix~\ref{app:cotangent-bundle-reduction} summarizes the derivation of the
reduced phase space for the system. We emphasize here that the derivation
readily generalizes to several bodies. Apart from the reduced phase space, one
needs to determine the reduced symplectic form in order to fully describe the
dynamics. \cite{Vankerschaver2009} use the general framework of cotangent bundle
reduction, which determines the symplectic form through the curvature of the
mechanical connection. While it is in principle possible to extend this approach
to several bodies of arbitrary shape, the derivation requires heavy
differential-geometric machinery, takes place in unreduced, infinite-dimensional
phase space, and is already challenging for the case of a single, circular disk. 

We make use of the fact that the dynamics of the surrounding fluid,
and thus the point vortex dynamics, is completely determined from first
principles. This allows to determine the symplectic form in finite-dimensional
reduced phase space, using standard methods. In particular, we use the fact
that point vortices are, by Helmholtz' law, advected along the fluid velocity
field $u$. Further we will show that the dynamics of rigid bodies and point
vortices decouples asymptotically. These two properties uniquely determine the
symplectic form, and we obtain the equations of motion by computing Hamilton's
equations
\begin{align}
\label{eqn:hamilton-equations}
\sigma(X, \dot{q}) = \d H(X), \quad \forall_X \in T_q {\mathcal M}
.
\end{align}
Here $\mathcal{M}$ is the reduced phase space, $\sigma$ is the symplectic form,
and $H \colon \, {\mathcal M} \rightarrow \R$ is the Hamiltonian. The reduced
phase space for the coupled system of $n$ bodies and $m$ point vortices is (see
Appendix~\ref{app:cotangent-bundle-reduction})
\begin{equation}
{\mathcal M} = T^* \SE(2)^n \times \R^{2 m},
\end{equation}
where $T^*\SE(2)^n$ is the cotangent bundle of $\SE(2)^n$. A point $q=(\mu, g,
\gamma) \in {\mathcal M}$ encodes body poses $g \in SE(2)^n$,
body momentum through the covector $\mu \in T_g^*SE(2)^n$, and the $m$ point
vortex locations $\gamma \in \R^{2 m}$. The two factors of ${\mathcal M}$ are
already symplectic manifolds, they are the phase spaces of the two
uncoupled systems:
\begin{itemize}
  \item $(T^* \SE(2)^n, \sigma_{can})$ is the phase space of $n$ rigid
  bodies. Any cotangent bundle carries a canonical symplectic structure,
  in this case $\sigma_{can} = \d \mu \wedge \d g$. 
  Hamilton's equations with Hamiltonian
  $H_\Bdy$~\eqref{eqn:hydrodynamically-coupled-hamiltonian} describe the motion
  of $n$ rigid bodies in potential flow.
  \item $(\R^{2 m}, \sigma_\gamma)$ is the phase space of $m$ point vortices in
  the plane. The symplectic form is $\sigma_\gamma = - \sum_i K_i \d x_i \wedge
  \d y_i$, the weighted sum of canonical symplectic forms on the individual
  $\R^2$ factors. Hamilton's equations 
  with Hamiltonian $H_\gamma$~\eqref{eqn:point-vortex-hamiltonian} describe the
  motion of $m$ point vortices in the plane.
\end{itemize}

We know from general theory of cotangent bundle reduction that the
Hamiltonian of the system is the kinetic energy~\eqref{eqn:kinetic-energy},
and that the symplectic form $\sigma$ on ${\mathcal M}$ is of the form
\begin{equation}
\sigma = \sigma_{can} + \d \alpha + \sigma_\gamma.
\end{equation}
Here $\sigma_{can}$ and $\sigma_\gamma$ are the symplectic forms on the
individual factors, and $\d \alpha$ is a \emph{magnetic} or \emph{Coriolis}
term, which is responsible for the dynamical coupling between rigid bodies and
point vortices. The two-form $\d \alpha$ lives on $\SE(2)^n \times \R^{2 m}$,
i.e., it is independent of $\mu$. In the remainder of this section we will
determine the magnetic term $\d \alpha$, and then derive the equations of
motion~\eqref{eqn:equations-of-motion} from Hamilton's
equations~\eqref{eqn:hamilton-equations}.

\subsubsection{Asymptotic Decoupling}
\label{sec:asymptotic}

We will now study the behavior of the coupled dynamical system in the limit
when rigid bodies and point vortices are far apart.
Assume that both bodies and point vortices are contained in two disjoint disks
of finite radius, and let $d$ denote the minimal distance between these disks.
We will now show that in the limit $d \rightarrow \infty$ the system
decouples:\\

\begin{lemma}
\label{lemma:decoupling}
The dynamical system decouples in the limit $d \rightarrow \infty$, i.e.,
\begin{equation*}
\lim_{d \rightarrow \infty} \d \alpha = 0.
\end{equation*}
\end{lemma}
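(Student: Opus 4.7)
The plan is to deduce the vanishing of $\d\alpha$ from the physical fact that, in the limit $d\to\infty$, the coupled fluid--body system splits into two independent subsystems (rigid bodies in potential flow with no vortices, and point vortices in the plane with no bodies), so that the only surviving contributions to the reduced symplectic form are the two canonical pieces $\sigma_{can}$ and $\sigma_\gamma$.

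First I would establish the decay of the two auxiliary potentials that mediate the coupling. Suppose the bodies lie in a disk of radius $R$ and the vortices in a disjoint disk of radius $R$ at distance at least $d$. The Biot--Savart field $u_\gamma$ from~\eqref{eqn:biot-savart} is $O(1/d)$ on each $\partial\Bdy_j$, so the image potential $u_I$ determined by the exterior Neumann problem~\eqref{eqn:image-neumann-problem} is $O(1/d)$ uniformly on compacts, and in particular at the vortex positions, by standard exterior Neumann estimates. Symmetrically, the body-motion potential $u_\Bdy = \grad\phi_\Bdy$ has bounded Neumann data on the bodies in~\eqref{eqn:motion-neumann-problem} and vanishes at infinity like $1/|z|^2$, so $\Phi_\Bdy(\gamma_i) = O(1/d)$ and $\psi_\Bdy(\gamma_i)\to 0$. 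Consequently the coupling quantities
\[
M_C = \sum_i K_i \Psi_\Bdy(\gamma_i) \longrightarrow 0, \qquad W_\Bdy = \dd{M_C,\Xi} \longrightarrow 0,
\]
and the Hamiltonian~\eqref{eqn:kinetic-energy} decomposes as $H = H_\Bdy(\mu,g) + H_\gamma(\gamma) + O(1/d)$.

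Now the unreduced fluid--body physics dictates that, in this limit, the two subsystems evolve independently: the body configurations feel only their own added-mass dynamics with the no-vortex Kirchhoff tensor, and the point vortices drift only by their mutual Biot--Savart field. Since the reduced dynamics on $\mathcal{M}$ must reproduce this decoupled evolution through Hamilton's equations~\eqref{eqn:hamilton-equations} with $\sigma = \sigma_{can} + \d\alpha + \sigma_\gamma$, and since the Hamiltonian itself already decouples, the magnetic contribution $\d\alpha$ cannot produce any cross-coupling in the equations of motion. Because $\d\alpha$ is $\mu$-independent, and because pure body--body or pure vortex--vortex contributions would already be accounted for by $\sigma_{can}$ and $\sigma_\gamma$, it suffices to argue that all mixed body--vortex components vanish; exhausting the admissible variations of the physical parameters (body shapes, masses, moments of inertia, vortex strengths) produces a separating family of product Hamiltonians whose decoupled Hamiltonian flows pin down every such mixed component, yielding $\d\alpha\to 0$ with the quantitative rate $O(1/d)$ inherited from the potential-theoretic estimates.

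The main obstacle I expect is precisely this last step: promoting the asymptotic decoupling of the Hamiltonian flow to the pointwise vanishing of $\d\alpha$ as a 2-form on $\SE(2)^n\times\R^{2m}$. The decay estimates in the first step are standard exterior-domain harmonic analysis, but the structural argument requires producing enough Hamiltonian vector fields -- by varying the physical parameters within the model -- to span all mixed directions and exploit the non-degeneracy of $\sigma$.
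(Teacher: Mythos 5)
Your first step---the potential-theoretic decay of $u_I$ and $u_\Bdy$ at the vortex locations, hence $M_C \to 0$ and $H = H_\Bdy + H_\gamma + {\mathcal O}(1/d)$---is sound and is essentially the content of Lemma~\ref{lemma:falloff-stream-functions} in the paper. The gap is in your second step. You take as a physical input that the two subsystems evolve independently in the limit, and then try to recover $\lim \d\alpha = 0$ from the non-degeneracy of $\sigma$ by varying the physical parameters (body shapes, masses, vortex strengths) to produce a ``separating family'' of decoupled Hamiltonian flows. This mechanism fails: the primitive $\alpha = \sum_i K_i \dd{\Psi_\Bdy(\gamma_i), \Xi}$, and hence $\d\alpha$ (as well as $\sigma_\gamma$ itself), depends on precisely those parameters, so different members of your family carry different magnetic terms and cannot be combined to pin down the components of a single two-form. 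Within one fixed system the Hamiltonian flow supplies only one vector field per phase point---the vortex velocities $\dot\gamma_i = u(\gamma_i)$ are determined by the configuration and are not freely adjustable---so ``exhausting variations'' does not span the mixed directions you need. Moreover, asserting that the reduced flow decouples is close to circular: at the level of the reduced system, decoupling of the flow is a \emph{consequence} of $\d\alpha \to 0$ together with the decay of the Hamiltonian difference, not an independent premise.

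The paper avoids this by a direct computation rather than a structural argument. Subtracting the two uncoupled Hamilton's equations from the coupled one, the contraction of $\d\alpha$ with the dynamical vector field is identified with $\d(H - H_\Bdy - H_\gamma)$, and the Hamiltonian difference is computed explicitly as $-\tfrac12 \sum_i K_i \psi_I(\gamma_i)$. Its differential involves only $\delta_g \psi_I(\gamma_i)$ and $u_I(\gamma_i)$, both of which are ${\mathcal O}(1/d)$ by Lemma~\ref{lemma:falloff-stream-functions} and the construction~\eqref{eqn:image-neumann-problem}. The legitimate first-principles inputs are Helmholtz' law, $\dot\gamma_i = u(\gamma_i)$, and the explicit form of the energy difference---not an assumed independence of the two subsystems. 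To repair your argument you would need to replace the parameter-variation step by this explicit identification of $\d\alpha$ contracted with the flow (or, alternatively, by the direct computation of $\d\alpha$ from the primitive $\alpha = W_\Bdy$ that the paper carries out later in Lemma~\ref{lemma:magnetic-symplectic-form}).
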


\begin{proof}
We consider the difference of Hamilton's equations of the coupled and the two
uncoupled systems. For the difference of the symplectic forms we obtain
\begin{align*}
\sigma - \sigma_{can} - \sigma_\gamma = \d \alpha,
\end{align*}
and the difference of the Hamiltonians is
\begin{align*}
H - H_\Bdy - H_\gamma =  - W_G - H_\gamma = -\frac12
\sum_i K_i \psi_I(\gamma_i).
\end{align*}
The difference of Hamilton's equations in the limit $d \rightarrow \infty$
is
\begin{align*}
\lim_{d \rightarrow \infty} \d \alpha((\Gamma,
\delta{\gamma}), (\Xi, \dot{\gamma})) & = \lim_{d \rightarrow \infty} \d (H -
H_\Bdy - H_\gamma)(\Gamma, \delta{\gamma})\\
& = \lim_{d \rightarrow \infty} \sum_i K_i \left( - \frac12
\delta_g \psi_I(\gamma_i)
 + \tdd{J u_I(\gamma_i), {\delta{\gamma}}_i} \right).
\end{align*}
The right hand side vanishes because of
Lemma~\ref{lemma:falloff-stream-functions} and the construction of $u_I$, see
the Equation~\eqref{eqn:image-neumann-problem}.
\end{proof}

It remains to determine the asymptotic behavior of the stream functions
$\psi_I$, $\psi_\Bdy$ and their variations with respect to $g$:\\

\begin{lemma}
\label{lemma:falloff-stream-functions}
The functions $\psi_I$, $\psi_\Bdy$, $\delta_g \psi_I$ and $\delta_g
\psi_\Bdy$ are ${\mathcal O}(|z|^{-1})$.
\end{lemma}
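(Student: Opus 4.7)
The plan is to treat all four functions as harmonic functions on the exterior of a bounded union of bodies, vanishing at infinity and satisfying no-net-flux conditions on each component of the boundary. The relevant tool is the Laurent/multipole expansion of harmonic functions on the exterior of a bounded set in~$\R^2$.

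First I would verify that both $\phi_I$ and $\phi_\Bdy$ are harmonic on~$\Fld$, vanish at infinity, and that their Neumann data integrate to zero on every body boundary $\partial \Bdy_j$. For $\phi_\Bdy$ the boundary datum is the normal component of a rigid body velocity field, whose integral over $\partial \Bdy_j$ vanishes by the divergence theorem applied to the incompressible field $\omega_j \times z + v_j$. For $\phi_I$ the boundary datum is $-\tdd{u_\gamma, n_j}$, and again the integral vanishes because $u_\gamma$ is divergence-free inside $\Bdy_j$ (no point vortices lie inside the bodies). Combined with the zero-circulation assumption around each body, this means the harmonic conjugate pair $(\phi,\psi)$ has no $\log|z|$ term in either component at infinity: indeed, $\phi - i\psi$ is a single-valued holomorphic function on the exterior that vanishes at infinity, so by standard complex analysis it admits a Laurent expansion
\begin{equation*}
\phi(z) - i\psi(z) = \sum_{k \ge 1} \frac{c_k}{z^k},
\end{equation*}
from which both $\phi$ and $\psi$ are $\mathcal O(|z|^{-1})$. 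Applying this to $\phi_I, \phi_\Bdy$ and their stream-function counterparts gives the first half of the lemma.

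For the variations $\delta_g \psi_I$ and $\delta_g \psi_\Bdy$, I would argue that the shape derivative with respect to a body pose $g_j$ yields, at a fixed field point $z$ far from the bodies, a function that again solves an exterior Neumann problem with modified but still zero-mean data on each $\partial \Bdy_j$. Concretely, differentiating the defining Neumann problems~\eqref{eqn:image-neumann-problem} and~\eqref{eqn:motion-neumann-problem} with respect to $g_j$ along a tangent vector $g_j \Gamma_j$ produces a boundary-value problem for $\delta_g \phi$ whose Neumann data involves only $u_\gamma$, $u_I$, $u_\Bdy$ and the rigid perturbation $\Gamma_j$ on $\partial \Bdy_j$; the divergence-free character of all involved vector fields shows that the total flux through each body vanishes, and the zero-circulation condition is preserved by the variation. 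Hence the same multipole argument applies to $\delta_g \phi_I, \delta_g \phi_\Bdy, \delta_g \psi_I, \delta_g \psi_\Bdy$.

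The main obstacle is the rigorous handling of the shape derivative, since the domain $\Fld$ itself depends on $g$. I would circumvent this by pulling back to a reference configuration: fix a reference domain and transport the Neumann problem by the time-dependent diffeomorphism induced by $g$, so that $\delta_g \psi$ becomes the derivative (at a fixed Eulerian point $z$ away from the bodies) of a family of solutions of Neumann problems posed on the same reference domain with smoothly varying data. Once this is in place, the estimate reduces to a uniform statement about the Laurent coefficients of the resulting holomorphic family, and the $\mathcal O(|z|^{-1})$ decay is inherited from the case without variation.
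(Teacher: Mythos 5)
Your proposal is correct in substance but follows a genuinely different route from the paper. The paper represents $\psi_I$ and $\psi_\Bdy$ as single layer (logarithmic) potentials with densities $\tau$ on the body boundaries, interprets $\tau$ as a vortex sheet strength, and observes that zero circulation forces $\oint_{\partial\Bdy_j}\tau = \oint_{\partial\Bdy_j}\delta_g\tau = 0$; the decay then follows by expanding the $\log$ kernel, and the variation $\delta_g\psi$ is handled by differentiating explicitly under the integral, which splits it into another zero-mean single layer potential plus a Biot--Savart term of order ${\mathcal O}(|z|^{-2})$ paired with the ${\mathcal O}(|z|)$ perturbation $\delta g_j^{-1}(z)$. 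You instead invoke single-valuedness of the complex potential $\phi - i\psi$ on the exterior domain (zero circulation kills the real period, zero net flux the imaginary one, so the residue of $u_1 - iu_2$ at infinity vanishes and there is no $\log z$ term) and read off the decay from the Laurent expansion; this is clean and uses exactly the same two physical inputs. The one place where your write-up is weaker than the paper's is the variation step: the explicit shape derivative of a Neumann problem has boundary data involving tangential derivatives of the original solution and curvature, not merely the rigid perturbation and the velocity fields, so the middle paragraph as stated glosses over real work. However, your final reduction rescues this: since all the domains contain a common exterior $|z|>R$, you can differentiate the Laurent coefficients $c_k(g)$ (given by Cauchy integrals over the fixed circle $|z|=R$) directly, note that the logarithmic coefficient vanishes identically in $g$ and hence so does its variation, and conclude ${\mathcal O}(|z|^{-1})$ decay for $\delta_g\phi$ and $\delta_g\psi$ without ever writing down the shape-derivative boundary value problem. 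I would recommend dropping the explicit differentiated Neumann problem and arguing purely via the Laurent coefficients on the fixed annulus; you should also note that the ${\mathcal O}(|z|^{-1})$ statement presupposes normalizing the additive constants in $\phi$ and $\psi$ to vanish at infinity, a normalization the paper's single layer representation builds in automatically.
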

\begin{proof}
Let $u$ be $u_I$ or $u_\Bdy$, and $\psi$ the stream function of $u$. We represent
$\psi$ as a single layer potential with density $\tau$, i.e.,
\begin{align*}
\psi(z) = & \sum_j\nolimits
\psi^j(z) & \psi^j(z) = & \oint_{\partial \Bdy_j} \tau(\eta) \log
\norm{\eta-g_j^{-1}(z)} \, d\eta.
\end{align*}
Note that we can also view $\tau$ as the strength of a vortex sheet (i.e., a distribution
of point vortices) on $\partial \Bdy$, which generates $u$
via the Biot-Savart law~\eqref{eqn:biot-savart}. Since the circulation of $u = \grad \phi$ around
any $\Bdy_j$ is zero, the total density of $\tau$ on each
$\partial \Bdy_j$, as well as its variation with respect to $g$, vanish:
\begin{equation}
\label{eqn:stream-density-zero}
\oint_{\partial \Bdy_j} \tau(\eta) \, d\eta = \oint_{\partial \Bdy_j}
(\delta_g \tau)(\eta) \, d\eta = 0.
\end{equation}
This implies $\psi^j(z) = \psi(z) = {\mathcal O}(|z|^{-1})$. For the variation with
respect to $g$ we obtain
\begin{align*}
\delta_g \psi^j(z) = 
    \dd{\oint_{\partial \Bdy_j} \tau(\eta)
    \frac{\eta-g_j^{-1}(z)}{\norm{\eta-g_j^{-1}(z)}^2} \, d\eta,  -
    \delta g_j^{-1}(z) } + \oint_{\partial \Bdy_j} (\delta_g \tau)(\eta)
    \log \norm{\eta - g_j^{-1}(z)} \, d\eta.
\end{align*}
The second term is also a single layer potential and ${\mathcal O}(|z|^{-1})$
because of~\eqref{eqn:stream-density-zero}. The boundary integral in the first
term is a rotated version (by $-\pi/2$) of the velocity field $u$
generated by the single layer density $\tau$ on $\partial \Bdy_j$. Thus it is ${\mathcal O}(|z|^{-2})$
because of~\eqref{eqn:stream-density-zero}, while $\delta g_j^{-1}(z)$ is
${\mathcal O}(|z|)$. So the first term is ${\mathcal O}(|z|^{-1})$ as well.
\end{proof}

\subsubsection{Dynamics of Hydrodynamically Coupled Rigid Bodies}

The dynamics of rigid bodies in potential flow is a canonical Hamiltonian system
with phase space ${\mathcal M} = T^* \SE(2)^n$ and Hamiltonian $H_\Bdy$, see
Equation~\eqref{eqn:hydrodynamically-coupled-hamiltonian}.
We use the left trivialization
\begin{equation}
\mathcal{M} = \R^{3 n} \times SE(2)^n \cong T^* \SE(2)^n
\end{equation}
that is, we identify a covector $\mu \in T_g^*\SE(2)^n$ with
its corresponding body momentum $M \in \R^{3 n}$. In this way we
obtain the equations of motion in the body frame of reference, as an
evolution equation for $M$.
The canonical symplectic form on $\mathcal{M}$ is derived in
Appendix~\ref{app:cotangent-bundle}. Now we compute Hamilton's
equations~\eqref{eqn:hamilton-equations} using the Hamiltonian
$H_\Bdy$~\eqref{eqn:hydrodynamically-coupled-hamiltonian}. We have $\dot{q} =
(\dot{M}, \Xi)$, $X = (\delta M, \Gamma)$, and obtain
\begin{gather*}
\dd{\delta{M}, \Xi} - \dd{\dot{M} - \coad{\Xi} M, \Gamma} =
\dd{\delta{M}, \KT^{-1} M} + \dd{\lgrad H_\Bdy, \Gamma}.
\end{gather*}
Comparing $\delta{M}$-coefficients gives $M = \KT \Xi$ while the
$\Gamma$-coefficients give the equations of motion as an evolution equation for
$M$:
\begin{align*}
\dot{M} - \coad{\Xi} M = - \lgrad H_\Bdy \qquad \Longleftrightarrow 
\qquad
\vec{\dot{A}_j + V_j
\times L_j
\\
\dot{L}_j + \Omega_j \times L_j} = - \big( \lgrad H_\Bdy \big)_j.
\end{align*}

\subsubsection{Point Vortex Dynamics}
\label{sec:point-vortex-dynamics}

According to Helmholtz' law (see, for instance, \citet{Saffman1992}), point
vortices are frozen into the fluid velocity field:
$\dot{\gamma}_i = u(\gamma_i)$. The Hamiltonian formulation of the system is
obtained as follows. The symplectic form for $m$ point vortices $\gamma \in
\R^{2 m}$ with strengths $K_i$ is
\begin{align*}
\sigma_{\gamma} = - \sum_i K_i \, \d x_i \wedge \d y_i, \quad \text{ i.e., }
\quad \sigma_\gamma(\delta{\gamma}, \dot{\gamma})
= \sum_i K_i \dd{\delta{\gamma}_i, J \dot{\gamma}_i}.
\end{align*}
In the absence of boundaries the Hamiltonian is
$H_\gamma$~\eqref{eqn:point-vortex-hamiltonian}. Computing Hamilton's equations,
\begin{equation*}
\sigma_\gamma(\delta{\gamma}, \dot{\gamma}) = \d H_\gamma(\delta{\gamma})
\qquad \Longleftrightarrow \qquad
\sum_i K_i \dd{{\delta{\gamma}}_i, J \dot{\gamma}_i} = \sum_i K_i
\dd{J u_\gamma(\gamma_i),  {\delta{\gamma}}_i},
\end{equation*}
we obtain, as expected, $\dot{\gamma}_i = u_\gamma(\gamma_i)$. This holds also
for a fluid with fixed boundaries, i.e., for fixed body configuration. In this
case we can choose the negative of the Kirchhoff-Routh function $-W_G$
(Equation~\eqref{eqn:kirchhoff-routh-function}) as the Hamiltonian and obtain
$\dot{\gamma}_i = u_\|(\gamma_i)$. In both cases the Hamiltonian is the kinetic
energy of the fluid velocity field (with infinite self-energy terms excluded),
and the Hamiltonian is a constant of motion.

In the case of rigidly moving boundaries we use the \emph{generalized
Kirchhoff-Routh function}, extended to rigidly moving boundaries by
\citet{Shashikanth2002}:
\begin{equation}
\label{eqn:generalized-kirchhoff-routh-function}
W = W_G + W_\Bdy, \quad W_\Bdy = \sum_i K_i \psi_\Bdy(\gamma_i).
\end{equation}
Choosing $-W$ as the Hamiltonian gives the correct point vortex dynamics
for the case that some agency moves the bodies around. Here the
Hamiltonian depends explicitly on time, thus it is not a constant of motion.

\subsubsection{Dynamics of the Coupled System}
\label{sec:coupled-dynamics}

Looking at the dynamics of point vortices in a fluid with rigidly moving
boundaries (Section~\ref{sec:point-vortex-dynamics}), it is not surprising that
the function $W_\Bdy$~\eqref{eqn:generalized-kirchhoff-routh-function} needs to
make its way into Hamilton's equations of the coupled system. However, the
Hamiltonian of the system is already known: It is the kinetic energy $H=-W_G +
H_\Bdy$ of the combined fluid-body system, given in
Equation~\eqref{eqn:kinetic-energy}. On the other hand, $W_\Bdy$ can be viewed
as a one-form on $\SE(2)^n \times \R^{2 m}$. We will now show that $\alpha =
W_\Bdy$ is in fact a primitive of the magnetic term $\d \alpha$.\\

\begin{lemma}
\label{lemma:magnetic-symplectic-form}
The one-form
\begin{align}
\alpha(\Xi, \dot{\gamma}) & := W_\Bdy = \dd{M_C, \Xi}, & 
M_C & = \sum_i K_i \Psi_\Bdy(\gamma_i)
\end{align}
on $\SE(2)^n \times \R^{2 m}$ is a primitive one-form of the magnetic term $\d
\alpha$.
\end{lemma}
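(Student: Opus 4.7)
The magnetic term $\d\alpha$ on $\SE(2)^n\times\R^{2m}$ is pinned down by the two properties already emphasized in the preceding discussion: Hamilton's equations must reproduce Helmholtz' advection $\dot\gamma_i = u(\gamma_i)$ for $u=u_\gamma+u_I+u_\Bdy$, and $\d\alpha$ must vanish in the large-separation limit (Lemma~\ref{lemma:decoupling}). My plan is therefore to check that $\alpha := W_\Bdy = \dd{M_C,\Xi}$ realises both, which identifies it as a primitive of the magnetic term.

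Asymptotic decoupling is the easy half: each component $\Psi_{\Bdy_j}(\gamma_i)$, together with its $g$- and $\gamma$-variations, is $\mathcal{O}(d^{-1})$ by Lemma~\ref{lemma:falloff-stream-functions}, so $\alpha$ itself, and all of its partial derivatives entering $\d\alpha$, vanish as the minimum body-vortex distance $d\to\infty$.

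For the advection property I would compute Hamilton's equations $\sigma(X,\dot q) = \d H(X)$ with $\sigma = \sigma_{can} + \d(W_\Bdy) + \sigma_\gamma$ and $H = H_\Bdy - W_G$, using the left-trivialization of Section~\ref{sec:coupled-dynamics} and writing a general variation as $X = (\delta M,\Gamma,\delta\gamma)$. Collecting $\delta\gamma$-coefficients yields three contributions: the $\sigma_\gamma$ term gives $\sum_i K_i\dd{\delta\gamma_i,J\dot\gamma_i}$; the $-\d W_G$ term from the Hamiltonian side gives $\sum_i K_i\dd{J u_\|(\gamma_i),\delta\gamma_i}$ by~\eqref{eqn:energy-variation-gamma}; and the $\gamma$-partial of $\d(W_\Bdy)$ yields $\sum_i K_i \dd{J u_\Bdy(\gamma_i),\delta\gamma_i}$, since $\grad\psi_\Bdy = -J u_\Bdy$ and the body-velocity factor in $\alpha$ drops out of the purely vortex-variational component. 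Comparing coefficients produces $\dot\gamma_i = u_\|(\gamma_i)+u_\Bdy(\gamma_i) = u(\gamma_i)$, as required.

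The main obstacle is the careful bookkeeping of $\d(W_\Bdy)$ on mixed body-vortex variations, since $M_C$ depends on $g$ through the stream-function potentials $\Psi_\Bdy$ and on $\gamma$ through pointwise evaluation. Applying Cartan's formula $\d\alpha(X_1,X_2) = X_1\alpha(X_2) - X_2\alpha(X_1) - \alpha([X_1,X_2])$ in the left trivialization of $\SE(2)^n$, the Lie-bracket term produces a $\coad{\Xi}M_C$ correction that promotes the bare potential-flow body equation $\dot M - \coad{\Xi}M = -\lgrad H_\Bdy$ to the full $(\dot A_j + V_j\times L_j,\,\dot L_j + \Omega_j\times L_j) = -\big(\lgrad(H_\Bdy - W_G - W_\Bdy)\big)_j$ form of Theorem~\ref{thm:equations-of-motion}. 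As a consistency check I would also verify that the $\delta M$-coefficient reproduces the constitutive relation $\Xi = \KT^{-1}(M-M_C)$, so that the body momentum $M$ entering the theorem is indeed the generalized momentum $\KT\Xi + M_C$ defined just before it.
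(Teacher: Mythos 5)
Your plan runs the argument in the wrong logical direction, and in doing so it skips the one step that actually carries the proof. You propose to \emph{verify} that the candidate $\alpha = W_\Bdy$ is consistent with Helmholtz advection and with asymptotic decoupling, and then conclude that this ``identifies'' it as a primitive of the magnetic term because the two properties ``pin down'' $\d \alpha$. But that uniqueness claim is precisely what has to be proven here --- the sentence in Section~3.1 asserting it is a forward reference to this lemma, not a result you may cite. Concretely, the advection property only constrains the components $\d\alpha\left((0,\delta\gamma),(\Xi,\dot\gamma)\right)$: a perturbation $\tilde\alpha(\Xi,\dot\gamma)=\tdd{\tilde M(g),\Xi}$ with $\tilde M$ independent of $\gamma$ satisfies $\d\tilde\alpha((0,\delta\gamma),\cdot)=0$, so $\d(W_\Bdy)$ and $\d(W_\Bdy)+\d\tilde\alpha$ both reproduce $\dot\gamma_i=u(\gamma_i)$ while giving \emph{different body equations}. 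Your verification therefore cannot distinguish the true magnetic term from such a perturbed candidate, and the body equation you quote from Theorem~\ref{thm:equations-of-motion} is exactly what is at stake.

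The paper closes this gap by running the argument forward: starting from an \emph{unknown} $\alpha$, Helmholtz' law forces $\grad_\gamma\tdd{M_\alpha,\Xi}=\grad_\gamma\tdd{M_C,\Xi}$ (after first reducing to the normal form $\alpha=\tdd{M_\alpha(g,\gamma),\Xi}$ via the observation that $M=0$ implies $\Xi=0$), hence $M_\alpha=M_C+\tilde M(g)$; then Lemma~\ref{lemma:decoupling}, applied to the \emph{true} $\d\alpha$, together with $\lim_{d\to\infty}\d\hat\alpha=0$ from Lemma~\ref{lemma:falloff-stream-functions}, gives $\lim_{d\to\infty}\d\tilde\alpha=0$, and since $\d\tilde\alpha$ is independent of $\gamma$, hence of $d$, it vanishes identically. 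Your decoupling check is performed only on the candidate $\d\hat\alpha$, which is the easy half of this pincer; the indispensable other half is that the decoupling of the \emph{actual} reduced symplectic form kills the residual $\d\tilde\alpha$. Add that step and your computation of the $\delta\gamma$-coefficients (modulo a sign: $\grad_\gamma W_\Bdy$ contributes $-\sum_i K_i\dd{J u_\Bdy(\gamma_i),\delta\gamma_i}$ to the left-hand side, not $+$) becomes essentially the paper's proof.
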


\begin{proof}
Let us compute Hamilton's equations for the coupled system, i.e.,
\begin{align}
\label{eqn:hamilton-equations-coupled-1}
\sigma\left(X, \dot{q}\right) = \d H (X),
\end{align}
with  $\dot{q} = (\dot{M}, \Xi, \dot{\gamma})$ and $X = (\delta{M}, \Gamma,
\delta{\gamma})$ is an arbitrary tangent vector at $q=(M, g, \gamma)$:
\begin{equation}
\label{eqn:hamilton-equations-coupled-2}
\begin{aligned}
\d H(X)
& = \dd{\delta{M}, \KT^{-1} M} + \dd{\lgrad_g H, \Gamma }
+ \sum_i K_i \dd{J u_\|(\gamma_i), {\delta{\gamma}}_i},\\
\sigma(X, \dot{q}) 
 & = \dd{\delta{M}, \Xi} - \dd{\dot{M} - \coad{\Xi} M,
\Gamma}
+ \sigma_\gamma(X, \dot{\gamma})
+ \d \alpha \left((\Gamma, \delta{\gamma}), (\Xi, \dot{\gamma}) \right).
\end{aligned}
\end{equation}
From the $\delta{M}$-coefficients we immediately obtain $M = \KT \Xi$. Further,
using the linearity of $\d \alpha$ with respect to $(\Gamma, \delta{\gamma})$,
we have
\begin{align}
\label{eqn:variation-gamma}
\sigma_\gamma(\delta{\gamma}, \dot{\gamma})
+ \d \alpha((0, \delta{\gamma}), (\Xi, \dot{\gamma})) = \sum_i K_i
\dd{J u_\|(\gamma_i), {\delta{\gamma}}_i}.
\end{align}
By virtue of Helmholtz' law point vortices are frozen into the fluid, i.e.,
$\dot{\gamma}_i = u(\gamma_i)$. Therefore we can rewrite the right hand side
using $u_\| = u - u_\Bdy$ and obtain
\begin{align*}
\sum_i K_i \dd{J u_\|(\gamma_i), {\delta{\gamma}}_i} & = 
\sum_i K_i \dd{J (\dot{\gamma}_i - u_\Bdy(\gamma_i)),
{\delta{\gamma}}_i}\\
 & = \sigma_\gamma(\delta{\gamma}, \dot{\gamma}) + \dd{\grad_\gamma W_\Bdy,
\delta{\gamma}}
= \sigma_\gamma(\delta{\gamma}, \dot{\gamma}) + \dd{\grad_\gamma \tdd{M_C,
\Xi}, \delta{\gamma}}.
\end{align*}
Subtracting $\sigma_\gamma$ on both sides of~\eqref{eqn:variation-gamma} we have
\begin{align}
\label{eqn:dAlpha1}
\d \alpha \left((0, \delta{\gamma}), (\Xi, \dot{\gamma}) \right) =
\dd{\grad_\gamma \tdd{M_C, \Xi}, \delta{\gamma}}.
\end{align}
From the general theory of cotangent-bundle reduction we know that $\d \alpha$
does not dependent on $M$. In particular, we can consider $M=0$ which implies $\Xi = 0$
and thus verify that $\d \alpha \left((0, \delta{\gamma}), (0, \dot{\gamma})
\right) = 0$. As a consequence, we can assume $\alpha$ to be of the
form\footnote{Any one-form $\Theta$ on $\SE(2)^n \times \R^{2 m}$ can be written
as $\Theta(\Xi, \dot{\gamma}) = \tdd{M_g(g, \gamma), \Xi} + \tdd{M_\gamma(g,
\gamma), \dot{\gamma}}$.}
\begin{equation*}
\alpha(\Xi, \dot{\gamma}) = \dd{M_\alpha(g, \gamma), \Xi}.
\end{equation*}
We compute\footnote{As in the derivation of $\sigma_{can}$ in
Appendix~\ref{app:cotangent-bundle}, we use Equation~\eqref{eqn:d-omega} and
consider a 2--parameter family $g(s, t)$ with commuting partial derivatives.
This makes the Lie bracket term in~\eqref{eqn:d-omega} vanish, but introduces the
constraint~\eqref{eqn:velocity-integrability} on $\dot{\Gamma}$.} $\d \alpha$ and obtain
\begin{equation}
\label{eqn:dAlpha2}
\begin{aligned}
\d \alpha((\Gamma, \delta{\gamma}), (\Xi, \dot{\gamma}))
& = \delta \dd{{M}_\alpha, \Xi} - \dd{{M}_\alpha, \Gamma}\dot{}\\
& = \dd{\lgrad \tdd{M_\alpha, \Xi}, \Gamma}
 + \dd{\grad_\gamma \tdd{M_\alpha, \Xi}, \delta \gamma}
 - \dd{\dot{M}_\alpha - \coad{\Xi} M_\alpha, \Gamma}.
\end{aligned}
\end{equation}
When equating \eqref{eqn:dAlpha1} and \eqref{eqn:dAlpha2} (for $\Gamma=0$) we
obtain
\begin{align*}
\grad_\gamma \dd{M_\alpha, \Xi} = \grad_\gamma \dd{M_C, \Xi}.
\end{align*}
This determines $M_\alpha$ up to a contribution that is independent of
$\gamma$, i.e., $M_\alpha = M_C + \tilde{M}(g)$. It remains to be shown that
$\tilde{M}$ does not contribute to $\d \alpha$. Let us write $\alpha =
\hat{\alpha} + \tilde{\alpha}$ with
\begin{align*}
\hat{\alpha}(\Xi, \dot{\gamma}) & = \dd{M_C, \Xi}, &
\tilde{\alpha}(\Xi, \dot{\gamma}) & = \dd{\tilde{M}, \Xi}.
\end{align*}
When we consider the limit $d \rightarrow \infty$ (distance between point
vortices and bodies), it follows from Lemma~\ref{lemma:falloff-stream-functions}
that $\lim_{d \rightarrow \infty} \d \hat{\alpha} = 0$.
Lemma~\ref{lemma:decoupling} on the other hand guarantees that the system
decouples in this limit, i.e., $\lim_{d \rightarrow \infty} \d \alpha = 0$. It
follows that $\d \tilde{\alpha}$ vanishes identically, since it is independent
of $\gamma$, and thus of $d$.
\end{proof}
Now we are able to verify the equations of motion given
in Theorem~\ref{thm:equations-of-motion}:
\begin{proof}[Proof of Theorem~\ref{thm:equations-of-motion}]
The magnetic term is (see
Equation~\eqref{eqn:dAlpha2}, with $M_\alpha = M_C$)
\begin{align*}
\d \alpha & = \dd{ \lgrad W_\Bdy, \Gamma} - \sum_i K_i \dd{J u_\Bdy(\gamma_i), \delta
\gamma_i} - \dd{\dot{M}_C - \coad{\Xi} M_C, \Gamma}.
\end{align*}
Substituting into Hamilton's equations~(\eqref{eqn:hamilton-equations-coupled-1}
and~\eqref{eqn:hamilton-equations-coupled-2}) and bringing the first two terms
to the right hand side gives
\begin{align*}
RHS & = \dd{\delta{M}, \KT^{-1} M} + \dd{\lgrad_g (H_\Bdy - W), \Gamma }
+ \sum_i K_i \dd{J u(\gamma_i), {\delta{\gamma}}_i}, \\
LHS & = \dd{\delta M, \Xi} - \dd{(\dot{M}+\dot{M}_C) - \coad{\Xi} (M+M_C),
\Gamma} + \sum_i K_i \dd{J \dot{\gamma}_i, \delta{\gamma}_i}.
\end{align*}
Comparing coefficients gives the equations of
motion~\eqref{eqn:equations-of-motion}.
\end{proof}

\section{Lagrangian Formulation and Total Momentum}
\label{sec:lagrangian}

For any canonical Hamiltonian system on a cotangent bundle $\mathcal{M} = T^*
\mathcal{Q}$ with canonical symplectic form $\sigma = \d \Theta$, a Lagrangian
description is obtained through the Legendre transformation. That is, momentum
is expressed as a function on the tangent bundle $T \mathcal{Q}$, and the
Lagrangian is $L = \Theta - H$, also viewed as a function on $T \mathcal{Q}$.
The corresponding Euler-Lagrange equations give the same dynamics as the
Hamiltonian system, a classical result which can be found in any mechanics
textbook.

The above construction does not apply to point vortices, since the phase space
is not a cotangent bundle, and it is thus not clear (and in general not even
possible) how to split this space into configurations and momenta. Nevertheless,
if $\beta_\gamma$ is a primitive for $\sigma_\gamma = \d \beta_\gamma$, one can
show that the Lagrangian $L = \beta_\gamma - H_\gamma$ describes the dynamics of
point vortices, see \citet{Rowley2002}. We will now verify that this approach
also gives a Lagrangian description for the coupled dynamics of rigid bodies and
point vortices.

A primitive one-form of $\sigma = \sigma_{can} + \sigma_\gamma + \d \alpha$ is
easily found: The canonical symplectic form on $T^* \SE(2)^n$ is the exterior
derivative of $\beta_{can} = \tdd{M, \Xi} =  \tdd{\KT \Xi, \Xi}$,
see Appendix~\ref{app:cotangent-bundle}. For $\sigma_\gamma$ we use the
primitive
\begin{align*}
\beta_\gamma = - \frac{1}{2} \sum_i K_i \det(\gamma_i,
\dot{\gamma}_i) = - \frac{1}{2} \sum_i K_i \dd{\gamma_i, J \dot{\gamma}_i}.
\end{align*}
With $\alpha = W_\Bdy = \tdd{M_C, \Xi}$ obtain as the Lagrangian
\begin{align}
\label{eqn:lagrangian}
L & = \beta - H = \underbrace{\dd{\KT \Xi, \Xi}}_{2 T_\Bdy} + \beta_\gamma
+W_\Bdy - T_\Bdy + W_G = T_\Bdy + \beta_\gamma + W,
\end{align}
where we have expressed the Kirchhoff kinetic energy $H_\Bdy$ as a function on
the tangent bundle, denoted by $T_\Bdy$. We will now verify that $L$ is indeed a
Lagrangian for the system. At the same time we will determine the total momentum
of the system, by keeping track of the end points when using integration by parts:
\begin{align*}
0 = \delta S_L & = \delta \int_{t_0}^{t_1} L \, dt = \int_{t_0}^{t_1} \delta
\beta_\gamma + \delta (T_\Bdy + W) \, dt\\
& = \int_{t_0}^{t_1} \dd{\lgrad (T_\Bdy + W), \Gamma} + \dd{\underbrace{\KT \Xi
+ M_C}_{M}, \delta \Xi} \, dt\\
& + \sum_i K_i \int_{t_0}^{t_1} \frac12 \dd{\delta \gamma_i, J
\dot{\gamma}_i} + \frac12 \dd{\gamma_i, J \delta \dot{\gamma}_i} -
\dd{J u(\gamma_i), \delta \gamma_i} \, dt.
\end{align*}
Now we use $\delta \Xi = \dot{\Gamma} + \ad{\Xi} \Gamma$ (see
Equation~\eqref{eqn:velocity-integrability} in
Appendix~\ref{app:cotangent-bundle}) and integration by parts, i.e.,
\begin{align*}
\dd{M, \Gamma} \Big|_{t_0}^{t_1} & = \int_{t_0}^{t_1} \dd{\dot{M}, \Gamma} +
\dd{M, \dot{\Gamma}} \, dt,\\
\frac12 \dd{\gamma_i, J \delta{\gamma}_i} \Big|_{t_0}^{t_1}
& = \frac12 \int_{t_0}^{t_1} \dd{\dot{\gamma}_i, J \delta \gamma_i} +
\dd{\gamma_i, \delta \dot{\gamma}_i} \, dt,
\end{align*}
and obtain
\begin{align*}
\delta S_L & = \int_{t_0}^{t_1} \dd{\lgrad(T_\Bdy + W) + \coad{\Xi} M - \dot{M},
\Gamma} + \sum_i K_i \dd{J (\dot{\gamma}_i - u(\gamma_i)), \delta \gamma_i} \,
dt\\
& + \left( \dd{M, \Gamma} + \frac12 \sum_i K_i \dd{\gamma_i, J \delta \gamma_i}
\right) \Big|_{t_0}^{t_1}.
\end{align*}
For variations with fixed end points we obtain the equations of motion
(Theorem~\ref{thm:equations-of-motion}) as critical values of $S_L$. The
Euler-Lagrange equations of the system are
\begin{align}
\label{eqn:euler-lagrange}
\lgrad(T_\Bdy + W) + \coad{\Xi} M - \dot{M} & = 0, & \dot{\gamma}_i - u(\gamma_i) & =
0.
\end{align}
Since $\lgrad T_\Bdy = - \lgrad H_\Bdy$ we have proven:\\

\begin{corollary}
The function $L = T_\Bdy + \beta_\gamma + W$ is a Lagrangian for the coupled
system of rigid bodies and point vortices.
\end{corollary}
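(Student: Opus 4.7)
The plan is to observe that the variational calculation performed immediately before the statement has already done almost all the work: Hamilton's principle $\delta S_L = 0$ over variations with fixed endpoints yields the Euler--Lagrange system~\eqref{eqn:euler-lagrange}, and the kinematic relation $\dot g_j = g_j \Xi_j$ is built into the definition of $\Xi$. So all that remains is to check that this system coincides with the equations of motion stated in Theorem~\ref{thm:equations-of-motion}.

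First I would identify the conjugate momentum by reading off the $\delta \Xi$-coefficient in $\delta S_L$, which is exactly $M = \KT \Xi + M_C$, matching the generalized momentum used in the Hamiltonian formulation. The point-vortex equation $\dot{\gamma}_i = u(\gamma_i)$ is the second line of~\eqref{eqn:euler-lagrange} verbatim, and is identical to the last line of~\eqref{eqn:equations-of-motion}. So the only real content to verify is the rigid-body equation.

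The key (and only nontrivial) ingredient is the identity $\lgrad T_\Bdy = -\lgrad H_\Bdy$. Here $T_\Bdy(g,\Xi) = \tfrac12 \dd{\KT(g)\Xi,\Xi}$ is differentiated with $\Xi$ fixed, whereas $H_\Bdy(g,M) = \tfrac12 \dd{M,\KT(g)^{-1} M}$ is differentiated with $M$ fixed. A short computation using $\delta \KT^{-1} = - \KT^{-1}\, (\delta \KT)\, \KT^{-1}$ and the Legendre relation $M = \KT \Xi$ shows both sides equal $\pm \tfrac12 \dd{(\delta_g \KT)\Xi,\Xi}$ with opposite signs. I expect this sign bookkeeping to be the only place that requires care.

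Substituting this identity into the first Euler--Lagrange equation gives
\begin{equation*}
\dot{M} - \coad{\Xi} M \;=\; \lgrad(T_\Bdy + W) \;=\; -\lgrad(H_\Bdy - W_G - W_\Bdy),
\end{equation*}
which, after splitting into angular and linear components via $\coad{\Xi} M = (V \times L, \Omega \times L)$ as in the derivation of the body equations in Section~\ref{sec:coupled-dynamics}, reproduces exactly the first block of~\eqref{eqn:equations-of-motion}. Combined with the Legendre transform $\Xi = \KT^{-1}(M - M_C)$ and the already-matched kinematic and vortex equations, this establishes that $L$ reproduces the full equations of motion of Theorem~\ref{thm:equations-of-motion}, proving the corollary.
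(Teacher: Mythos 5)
Your proposal follows essentially the same route as the paper: the corollary is obtained by combining the first-variation computation of $S_L$ (which yields the Euler--Lagrange equations~\eqref{eqn:euler-lagrange} and identifies the momentum $M = \KT\Xi + M_C$) with the identity $\lgrad T_\Bdy = -\lgrad H_\Bdy$, which the paper merely asserts and you correctly verify via $\delta \KT^{-1} = -\KT^{-1}(\delta\KT)\KT^{-1}$ on the Legendre locus. The argument is correct; only note that with the paper's convention~\eqref{eqn:algebra-coadjoint-operator} one has $\coad{\Xi}M = -(V\times L,\ \Omega\times L)$, so your parenthetical sign is off, though your final displayed equation and conclusion are unaffected.
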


The total momentum of the system is obtained by applying an infinitesimal
Euclidean motion to a solution $q$ of the Euler-Lagrange equations. The
corresponding variation $\delta q$ has the form
\begin{align*}
\delta g_j & = \underbrace{\mat{\tilde{\omega} \times & \tilde{v} \\
0 & 0}}_c g_j = g_j \Gamma_j, & \delta \gamma_i = \tilde{\omega} \times \gamma_i
+ \tilde{v}.
\end{align*}
Since $q$ solves the Euler-Lagrange equations~\eqref{eqn:euler-lagrange} we
obtain
\begin{align}
\label{eqn:momentum-1}
\delta S_L = \left( \sum_j \dd{M_j, \Ad{g_j^{-1}} c} + \frac12 \sum_i K_i
\dd{\gamma_i, J \delta \gamma_i} \right) \Big|_{t_0}^{t_1}.
\end{align}
On the other hand, $T_\Bdy + W$ does not change under a Euclidean motion. Hence
\begin{equation}
\label{eqn:momentum-2}
\begin{aligned}
\delta S_L & = \int_{t_0}^{t_1} \delta \beta_\gamma \, dt = \frac12 \sum_i K_i
\int_{t_0}^{t_1} \dd{\delta \gamma_i, J \dot{\gamma}_i} + \dd{\gamma_i, J \delta
\dot{\gamma}_i} \, dt\\
& = \sum_i K_i \int_{t_0}^{t_1} \dd{\delta \gamma_i, J \dot{\gamma}_i} \, dt
+ \frac12 \sum_i K_i \dd{\gamma_i, J \delta{\gamma}_i} \Big|_{t_0}^{t_1}\\
& = \dd{ \sum_i K_i \vec{\frac12 \norm{\gamma_i}^2 \\ J \gamma_i},
\vec{\tilde{\omega} \\ \tilde{v}}}\Big|_{t_0}^{t_1} + \frac12 \sum_i K_i
\dd{\gamma_i, J \delta{\gamma}_i} \Big|_{t_0}^{t_1}.
\end{aligned}
\end{equation}
Here we have again used integration by parts and the fact that
\begin{align*}
\dd{\frac{d}{dt}\Big|_{t=0}\vec{\frac12 \norm{\gamma_i}^2 \\ J \gamma_i},
\vec{\tilde{\omega} \\ \tilde{v}}} = \dd{\delta \gamma_i, J \dot{\gamma}_i}.
\end{align*}
The total momentum is obtained by equating~\eqref{eqn:momentum-1}
and~\eqref{eqn:momentum-2}, and using the fact that this equation holds for any
$t_1$:\\

\begin{corollary}
The coupled system of rigid bodies and point vortices has the following
constants of motion induced by the Euclidean symmetry group:
\begin{align*}
const. & = \sum_j \vec{a_j \\ \ell_j} - \sum_i K_i \vec{\frac12
\norm{\gamma_i}^2
\\
J \gamma_i }, & \vec{a_j \\ \ell_j} & = \CoAd{g_j^{-1}} M_j
 = \vec{A_j  y_j \times R_j L_j \\ R_j L_j}.
\end{align*}
\end{corollary}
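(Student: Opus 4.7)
The plan is to equate the two expressions \eqref{eqn:momentum-1} and \eqref{eqn:momentum-2} for $\delta S_L$ that have already been derived under the infinitesimal Euclidean variation parametrized by $c = (\tilde\omega, \tilde v) \in \se(2)$, and then extract the conservation law from the arbitrariness of $c$ and of the time interval $[t_0, t_1]$. No new dynamical input is needed beyond the Euler-Lagrange equations~\eqref{eqn:euler-lagrange} and the observation that $T_\Bdy + W$ is Euclidean-invariant.

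First I would cancel the common boundary contribution $\tfrac{1}{2}\sum_i K_i \dd{\gamma_i, J\delta\gamma_i}\Big|_{t_0}^{t_1}$, which appears with the same sign and factor on both sides. Next, I would move the adjoint action across the pairing in the body term, using the definition of spatial momentum from~\eqref{eqn:spatial-momentum}:
\begin{equation*}
\sum_j \dd{M_j, \Ad{g_j^{-1}} c} = \sum_j \dd{\CoAd{g_j^{-1}} M_j, c} = \dd{\sum_j \vec{a_j \\ \ell_j}, c}.
\end{equation*}
This writes both remaining sides explicitly in the form $\tdd{\,\cdot\,, c}\Big|_{t_0}^{t_1}$, with the point-vortex side already in that shape by the last line of~\eqref{eqn:momentum-2}.

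After cancellation and collection, what remains is the identity
\begin{equation*}
\dd{\sum_j \vec{a_j \\ \ell_j} - \sum_i K_i \vec{\tfrac12 \norm{\gamma_i}^2 \\ J\gamma_i},\, c}\Big|_{t_0}^{t_1} = 0,
\end{equation*}
which must hold for every $c \in \se(2)$ and every pair $t_0 < t_1$. Fixing $t_0$ and varying $t_1$ forces the bracketed three-vector to be independent of time; non-degeneracy of the inner product against arbitrary $c = (\tilde\omega, \tilde v)$ then shows that the angular and linear components are separately conserved, giving exactly the claimed constants of motion.

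The one delicate bookkeeping step hides inside the derivation of~\eqref{eqn:momentum-2}, where the integral $\sum_i K_i \int_{t_0}^{t_1} \dd{\delta\gamma_i, J\dot\gamma_i}\,dt$ is collapsed to an endpoint pairing with $(\tfrac12 \norm{\gamma_i}^2, J\gamma_i)^t$. Writing $\delta\gamma_i = \tilde\omega \times \gamma_i + \tilde v = \tilde\omega\, J\gamma_i + \tilde v$ and using $\dd{J\gamma_i, J\dot\gamma_i} = \dd{\gamma_i, \dot\gamma_i}$, the integrand becomes $\tilde\omega\, \tdd{\gamma_i, \dot\gamma_i} + \tdd{\tilde v, J\dot\gamma_i}$, which is a total time derivative; this is exactly where one must be careful with the $J$ and with signs, but once it is in place, the rest of the proof is purely formal substitution.
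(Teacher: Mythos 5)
Your proposal is correct and follows essentially the same route as the paper: the paper likewise obtains the conservation law by equating \eqref{eqn:momentum-1} and \eqref{eqn:momentum-2}, cancelling the common endpoint term $\tfrac12\sum_i K_i \dd{\gamma_i, J\delta\gamma_i}\big|_{t_0}^{t_1}$, moving $\Ad{g_j^{-1}}$ across the pairing via \eqref{eqn:spatial-momentum}, and using that the resulting identity holds for every $c=(\tilde\omega,\tilde v)$ and every $t_1$. Your explicit verification that $\dd{\delta\gamma_i, J\dot\gamma_i}$ is a total time derivative is precisely the identity the paper invokes at the end of the derivation of \eqref{eqn:momentum-2}, so nothing is missing.
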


\section{Numerical Simulation}
\label{sec:time-integration}

In this section we briefly describe how to implement a numerical method to
simulate the dynamics of the coupled system, and validate our method by
simulating different configurations.

We have chosen to construct a variational integrator~\cite{Marsden2001} for the
system, based on the Lagrangian formulation given in Section~\ref{sec:lagrangian}.
The Lagrangian of the system is partly degenerate, so it fits into the framework
of variational integrators for degenerate Lagrangian systems, see
\citet{Rowley2002}. Some aspects of implementation regarding the Lie group
configuration space can be found in \citep{Kobilarov2009}.

Our implementation uses a midpoint scheme, i.e., we discretize the smooth action
integral by evaluating in between two configurations (along a geodesic
connecting them), and multiplying the corresponding value with the time step:
\begin{align*}
L^d(q_k, q_{k+1}) = h L\left(\evat{(q, \dot q)}_{k+\frac12}\right).
\end{align*}
Here the indices correspond to the discrete time evolution. The discrete time
evolution is then obtained by subsequently solving the discrete
Euler-Lagrange equations
\begin{align*}
D_2 L^d(q_{k-1}, q_k) + D_1 L^d(q_k, q_{k+1}) = 0,
\end{align*}
for $q_{k+1}$, with $q_{k-1}$ and $q_k$ known.
In order to evaluate the Lagrangian $L$~\eqref{eqn:lagrangian} we
discretize the system as follows: We replace the smooth body boundaries
by polygons, and represent the potential fields $u_I$ and $u_\Bdy$ using point
sources, which are attached to the rigid bodies. This discretization has
previously been used to compute the Kirchhoff tensor of 3D bodies
\citep{Weissmann2012}. This allows to explicitly compute all quantities and
variations needed for evaluating the discrete Euler-Lagrange equations, and we
have implemented a numerical scheme in this way. For validation we have
simulated the following configurations:

\begin{figure}
\centering
\includegraphics[width=0.33\columnwidth]{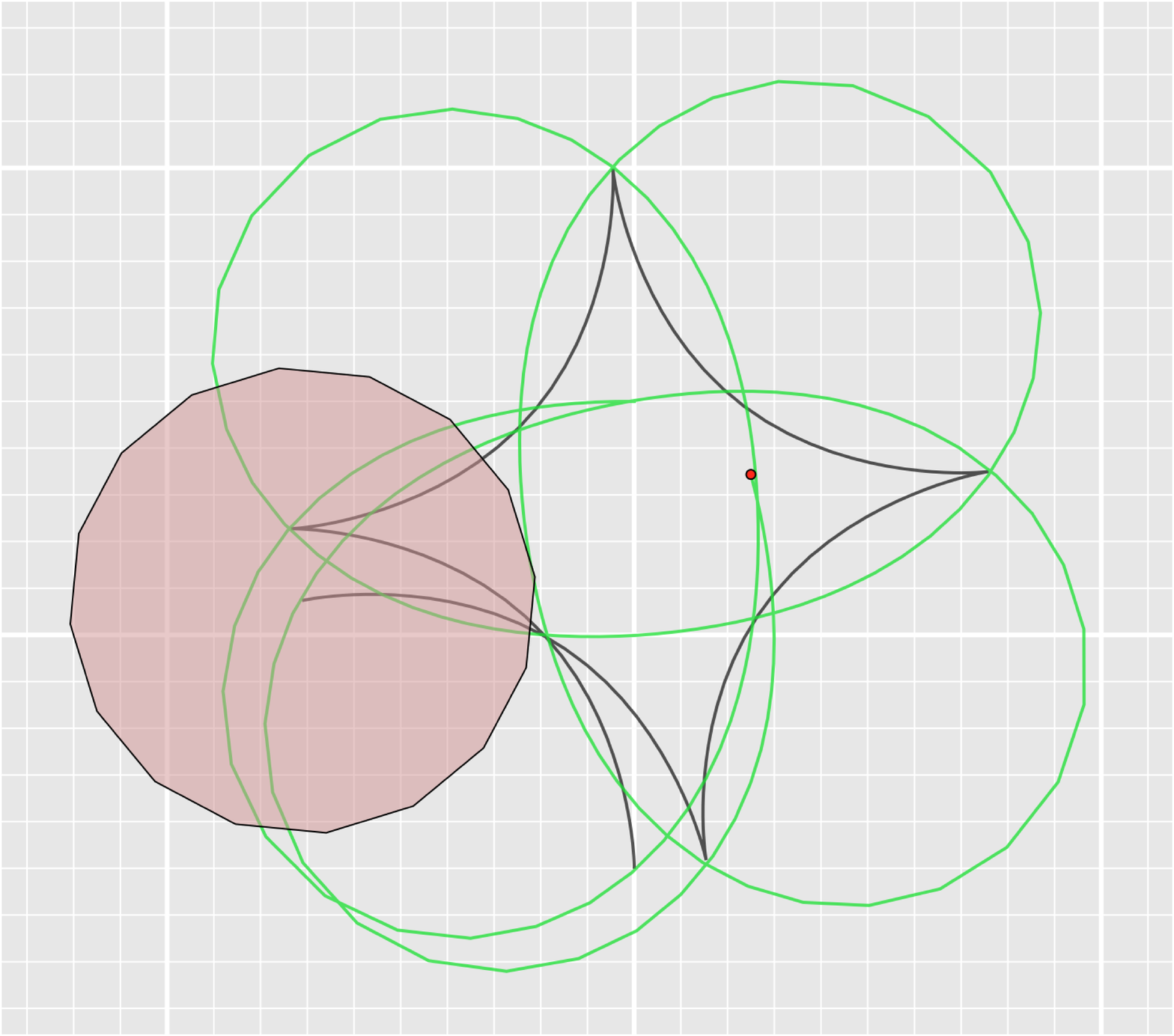}
\hspace{0.0025cm}
\includegraphics[width=0.33\columnwidth]{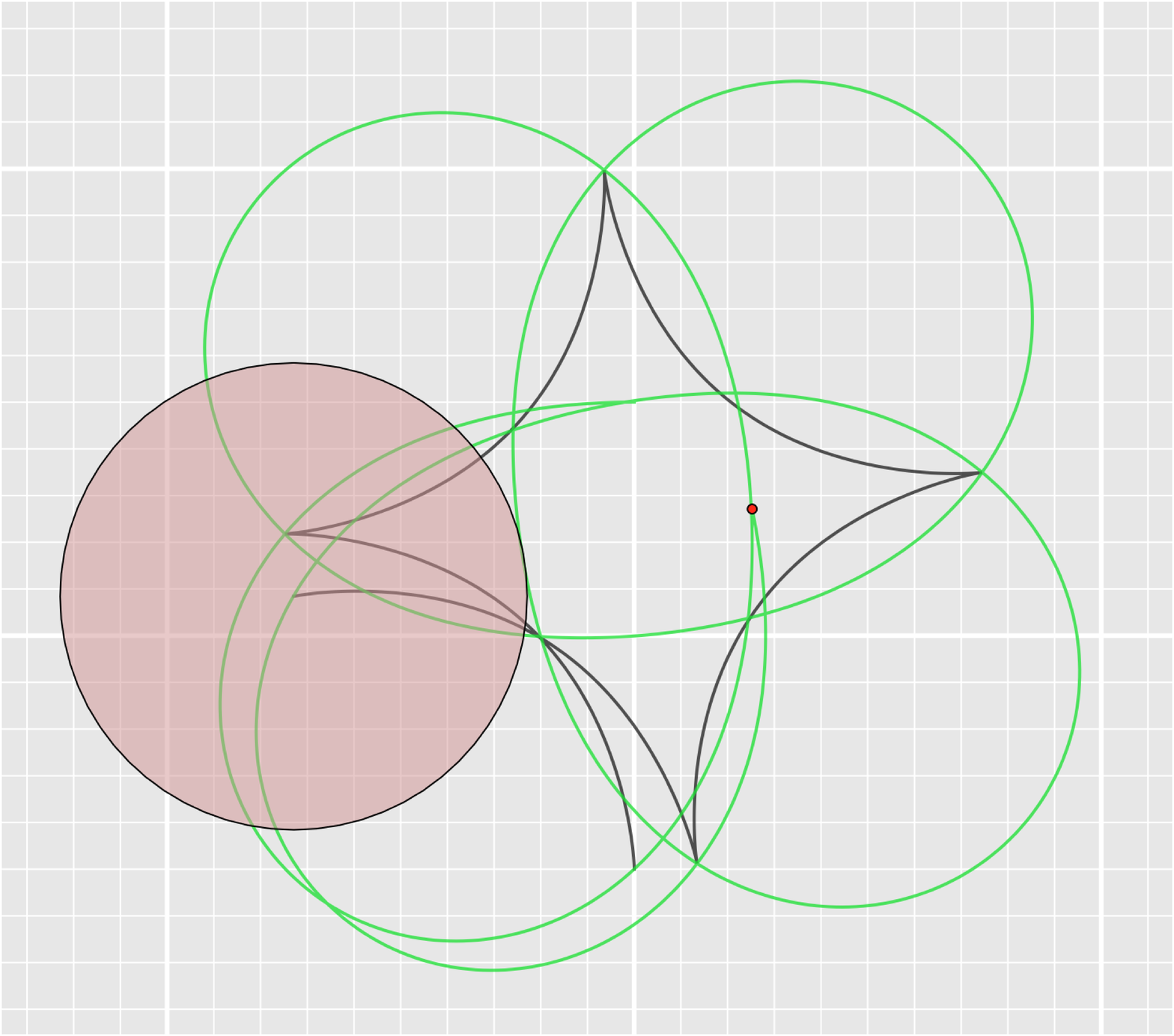}\\[0.125cm]
\includegraphics[width=0.33\columnwidth]{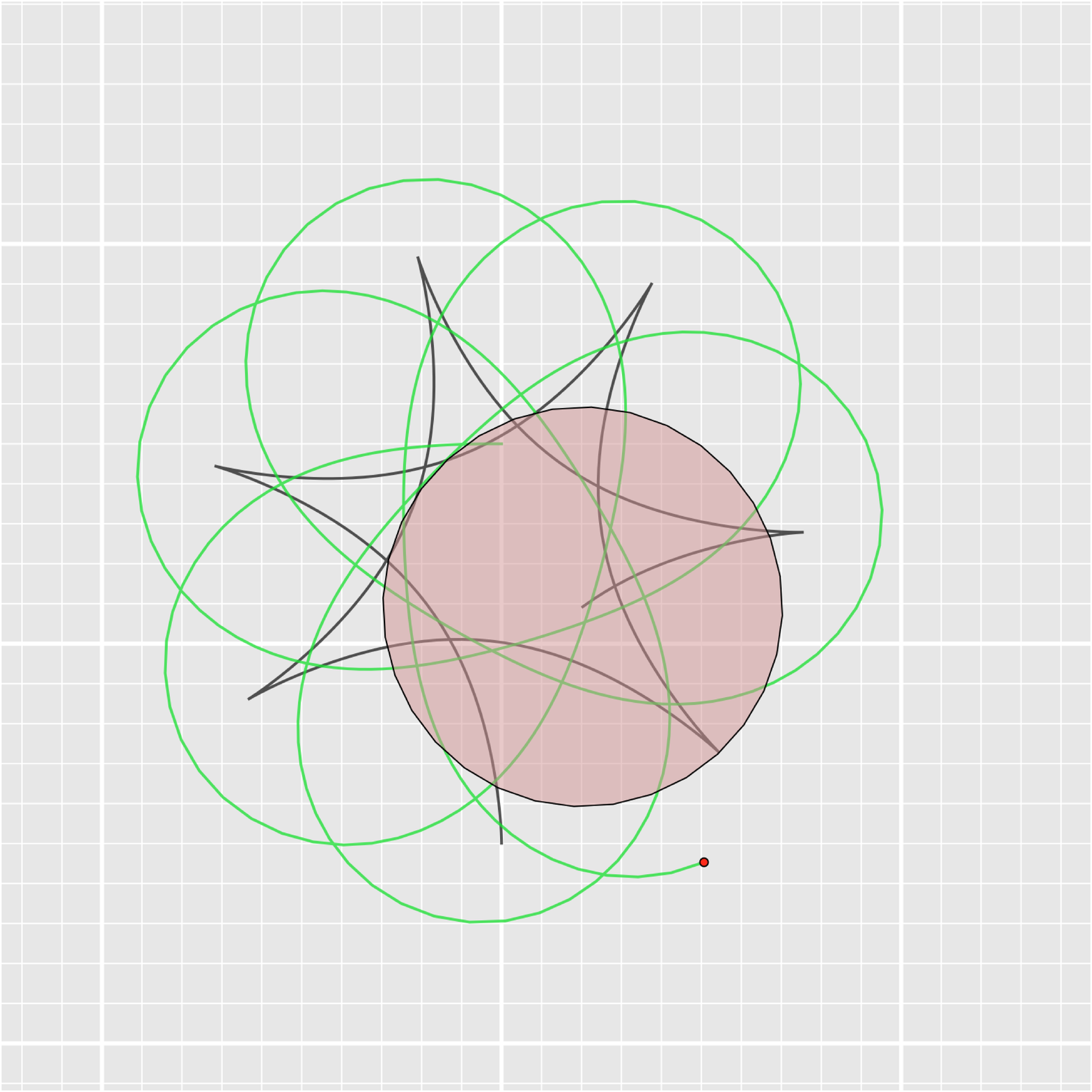}
\hspace{0.0025cm}
\includegraphics[width=0.33\columnwidth]{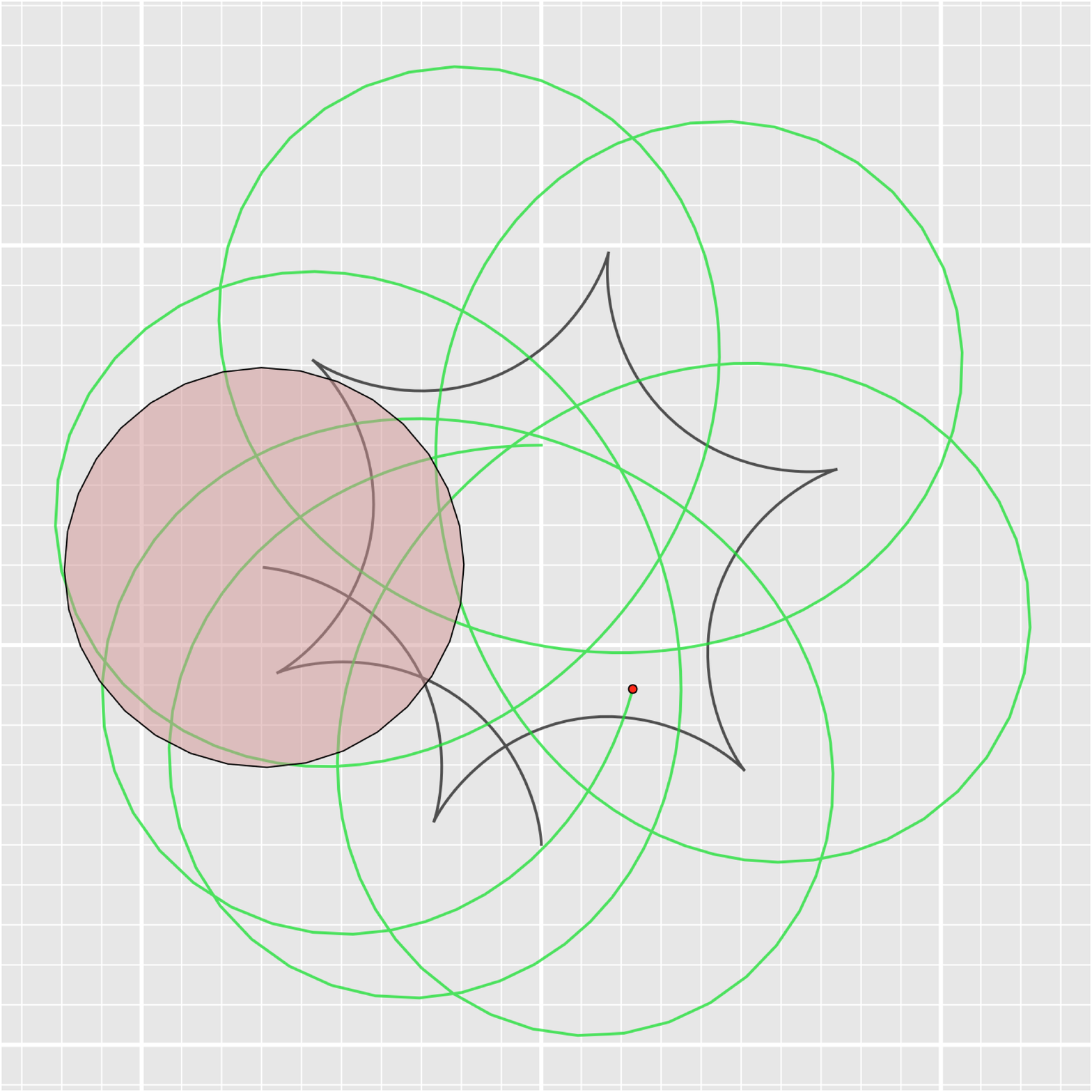}
\caption{Interaction of a unit disk with one point vortex.
\emph{Top:} The two
figures show the simulation of the same initial setup, but with different time
step (0.025 vs. 0.25) and body discretization (16 vs. 64 edges). Note that
the overall structure is identical, even though both discretization and time step
are very different.
\emph{Bottom:} Simulations with a light (density 0.1) and a heavy (density 4.0)
disk.
Mass affects the frequency of the periodic motion.}
\label{fig:disk-vortex}
\end{figure}
\begin{figure}
\centering
\includegraphics[width=0.65\columnwidth]{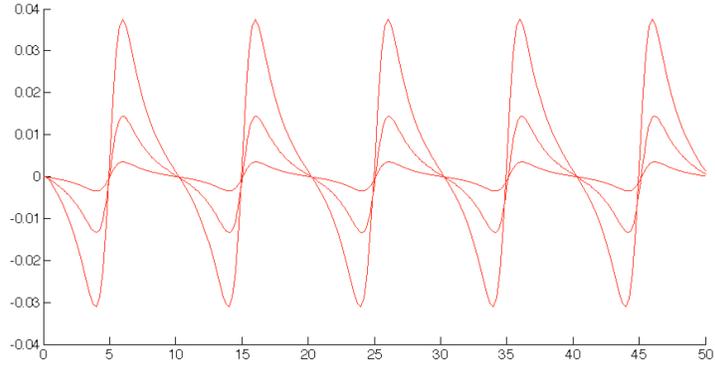}
\caption{Energy oscillation of the disk-vortex interaction. The three plots correspond
to time steps 0.025, 0.1 and 0.25. Note that the oscillations agree precisely,
but the magnitude is proportional to the time step.}
\label{fig:energy-disk-vortex}
\end{figure}

\begin{figure}
\centering
  \includegraphics[width=.325\textwidth]{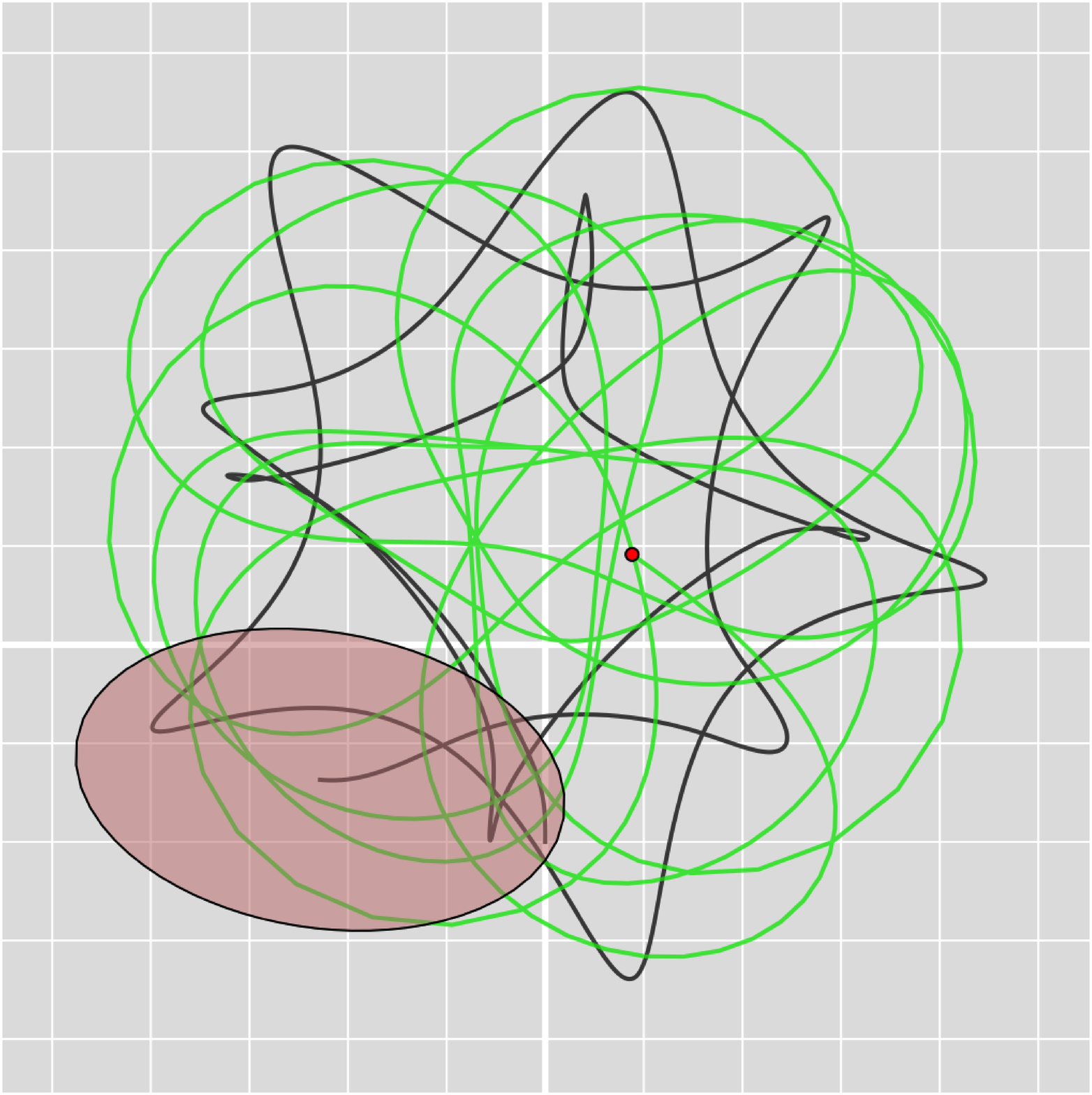}\,\,\,
 \includegraphics[width=0.6\columnwidth]{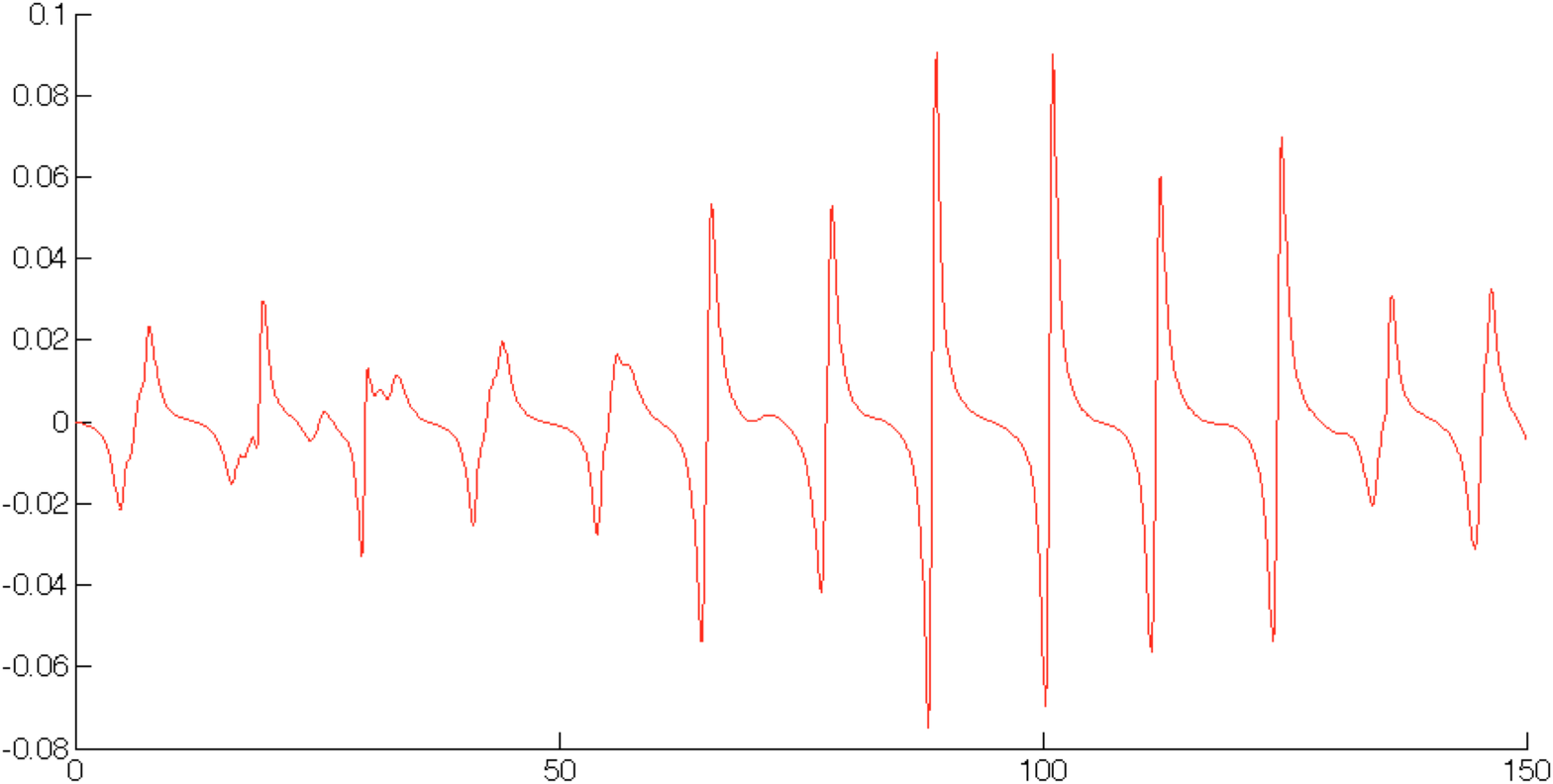}
  
\caption{Interaction of a point vortex with an ellipse, with time step
0.1. The motion is chaotic, in contrast to disk/vortex interaction.
The plot shows energy oscillations. Large peeks correspond to high
dynamical interaction, when body and point vortex are close together.
}
\label{fig:ellipse-vortex}
\end{figure}


\begin{figure}
\centering
\includegraphics[width=0.425\columnwidth]{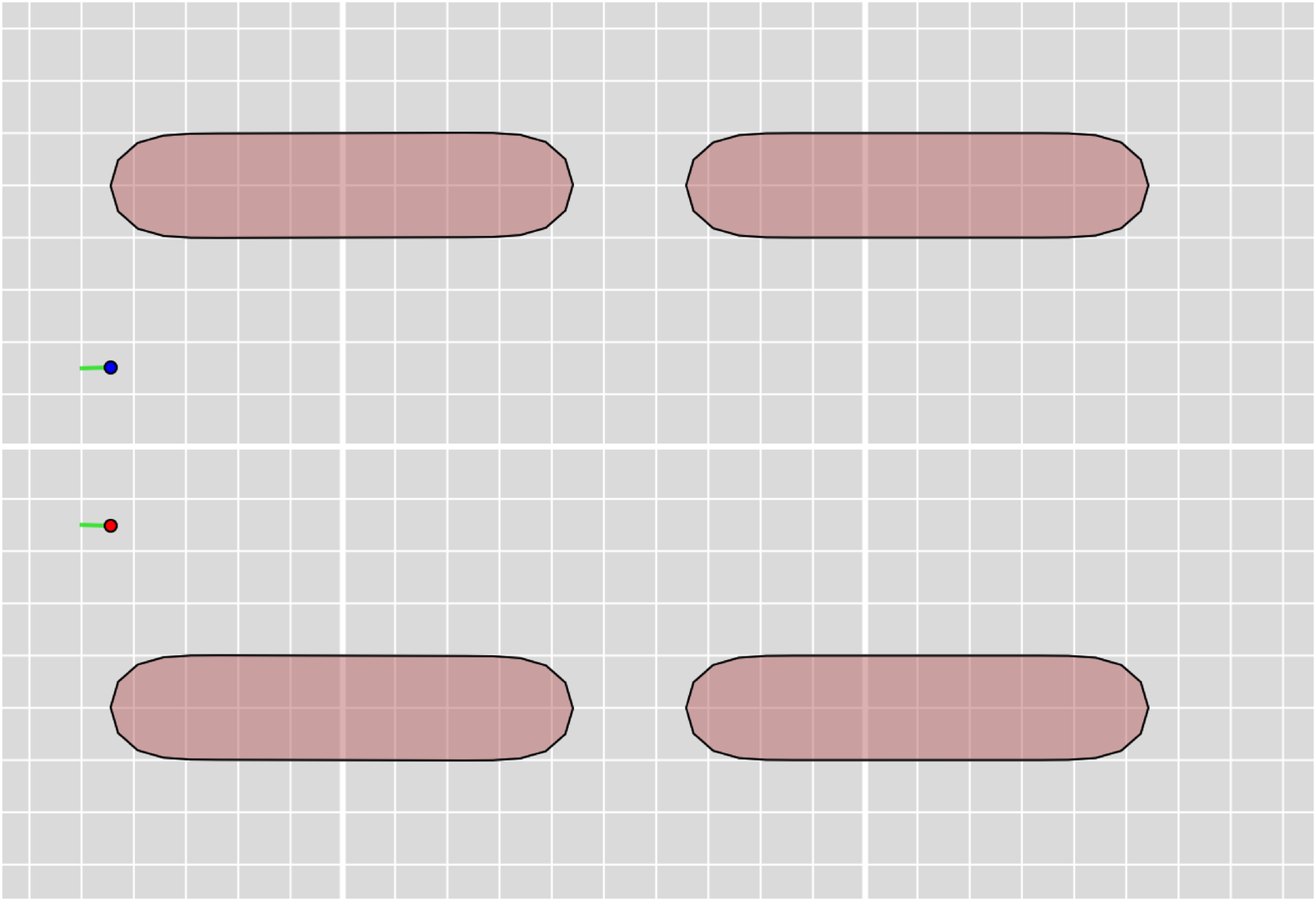}\,\,\,
\includegraphics[width=0.425\columnwidth]{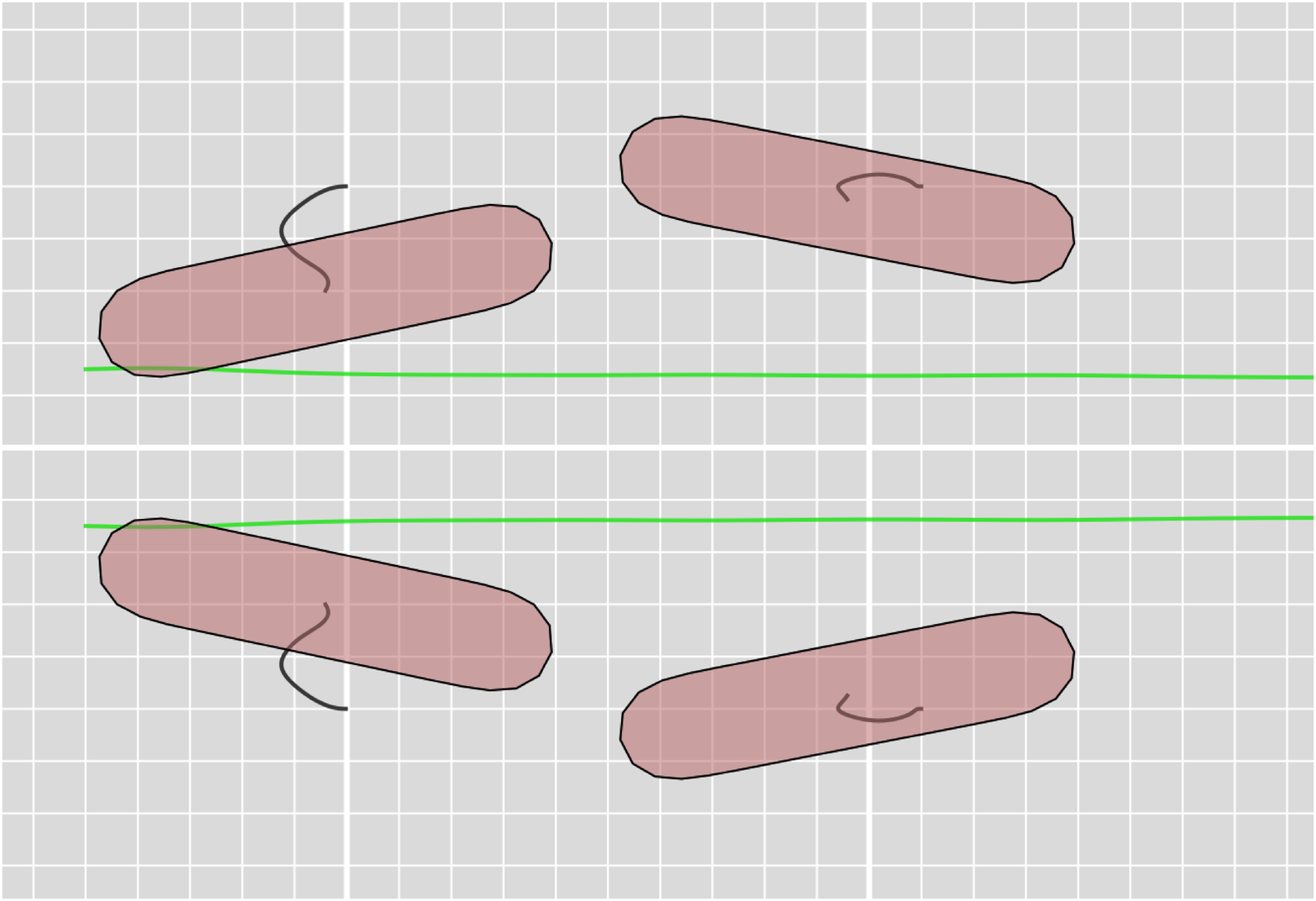}
\caption{Fluid flow inside of a channel, initial configuration (left) and
configuration after the vortex pair has traveled through the channel (right).
Because of the higher velocity the pressure is lower in between the walls
(Bernoulli's principle), dragging them together.}
\label{fig:channel}
\end{figure}

\textbf{Disk with single point vortex:} This case is particularly interesting,
since it is one of the rare cases where fluid-body interaction is integrable
\citep{Borisov2003-int}. The system has periodic (and even closed) orbits, i.e.,
disk and vortex ``dance'' around each other, producing a regular, periodic
pattern. The mass of the disk determines the frequency of the oscillating
motion. Different motions are shown in Figure~\ref{fig:disk-vortex}.

\textbf{Ellipse with single point vortex:}
This case can be viewed as a distortion of the integrable case of a disk.
However, this small change drastically changes the behavior of the system:
There are no closed orbits and the motion is chaotic, see Figure~\ref{fig:ellipse-vortex}.


\textbf{Flow through a channel:}
This configuration (Figure~\ref{fig:channel}) illustrates Bernoulli's principle,
i.e., pressure is low in regions of high velocity. A vortex pair travels
through a channel made out of four flat objects. Due to higher velocity
in between the walls are pulled together.

All simulations preserve linear and angular momentum (in the absence of external
forces) up to the precision used when solving the discrete Euler-Lagrange
equations.
The total energy of the system oscillates around its true value during the
simulations. The magnitude of these oscillations appears to be proportional to
the chosen time step (Figure~\ref{fig:energy-disk-vortex}), while the body
discretization has no significant influence. These oscillations can be large at
times of high dynamical interaction, i.e., when the vortices are very close to
the bodies. Nevertheless there is no drift, only oscillations around the true
energy level (Figure~\ref{fig:ellipse-vortex}, right). All simulations were
computed on a Macbook Pro with a 2.7\,GHz Intel Core i7 and 16\,GB RAM. The
implementation is done in Matlab, and uses no performance optimization such as
GPU computations.
Configurations with one body take about 0.5\,s per time step, the channel
example (Figure~\ref{fig:channel}) with 4 bodies around 30\,s per time step.

\section{Conclusions and Outlook}

We have introduced the Hamiltonian description for several rigid
bodies interacting with point vortices, assuming zero circulation around the
individual bodies, but arbitrary point vortex strengths. We have used the general
framework of cotangent bundle reduction only to determine the reduced phase
space of the system, as well as the general structure of the symplectic form on
the reduced phase space. From there we have determined the symplectic
form directly, without resorting to the abstract framework of mechanical
connections.
From the Hamiltonian formulation we have given a Lagrangian description of
the dynamics, and derived a variational time integrator following
\citet{Marsden2001} and \citet{Rowley2002}. Using polygonal bodies and point
sources, we have implemented a numerical algorithm to simulate the coupled
dynamics and validated the implementation with different configurations.\\
We expect that our formulation generalizes to the 3D case, describing the
dynamics of several rigid bodies interacting with vortex filaments.
So far the dynamics is only known for the case of a single rigid body
\citep{Shashikanth2008}.\\

\textbf{Acknowledgment:}
Ulrich Pinkall proposed the basic idea for deriving the symplectic form. It is
my great pleasure to thank him for invaluable discussions and suggestions. Felix
Kn{\"o}ppel and David Chubelaschwili helped working out many of the details. Eva
Kanso and the anonymous reviewers provided important feedback for improving the
exposition. This work is supported by the DFG Research Center Matheon and the
SFB/TR 109 ``Discretization in Geometry and Dynamics''.

\appendix

\section{Cotangent Bundle Reduction of Fluid-Body Dynamics}

\label{app:cotangent-bundle-reduction}

In analogy to Arnold's geometric description of fluid
dynamics~\citep{Arnold1966}, the dynamics of rigid bodies interacting with a
surrounding incompressible fluid can be viewed as a geodesic problem on a
Riemannian manifold. The kinetic energy defines a Riemannian metric on the
configuration space, and geodesics satisfy Hamilton's equations on the cotangent
bundle with kinetic energy as the Hamiltonian. This insight is due to
\citet{Vankerschaver2009} (VKM). The authors use the framework of cotangent
bundle reduction \citep{Marsden2007} to obtain a reduced Hamiltonian system with
magnetic symplectic form for the case of a single body in a fluid whose
vorticity field is concentrated at point vortices.

The Hamiltonian formulation by VKM is an extension of Arnold's original work
\citep{Arnold1966}, which describes the motion of an incompressible inviscid
fluid in a fixed fluid domain $\Fld$ as a geodesic on the group
$\Diff_{\vol}(\Fld)$ of volume-preserving diffeomorphisms on $\Fld$.
However, when the fluid interacts with rigid bodies, the fluid domain is no
longer fixed. The idea of VKM is to consider the space $\Emb_{\vol}(\Fld^0,
\R^2)$ of volume-preserving embeddings of an initial reference configuration
$\Fld^0$ into $\R^2$ instead of $\Diff_{\vol}(\Fld)$. Any incompressible fluid
motion is then described by a curve in the subset
$\mathcal{Q}^\Fld \subset \Emb_{\vol}(\Fld^0, \R^2)$ which is compatible with
the body motion. The configuration space of the coupled system is $\mathcal{Q} =
\SE(2)^n \times \mathcal{Q}^\Fld$, and the dynamics is a canonical Hamiltonian
system on $T^* \mathcal{Q}$ with kinetic energy as the Hamiltonian.

The kinetic energy is invariant under volume-preserving diffeomorphisms of the
initial fluid configuration $\Fld^0$ (\emph{particle relabeling symmetry}),
i.e., the symmetry group $\Diff_{\vol}(\Fld^0)$ acts from the right on
$\mathcal{Q}^\Fld$, and thus on $\mathcal{Q}$. This action turns $\mathcal{Q}$
into a principal fiber bundle over $\SE(2)^n$. This structure allows to follow
the famous Kaluza-Klein approach to determine the Hamiltonian dynamics. In order
to factor out the $\Diff_{\vol}(\Fld^0)$-symmetry one needs to fix a value of
the associated momentum map, which corresponds to choosing an initial vorticity
field of the fluid. This is where the assumption is used that vorticity is
concentrated at $m$ point vortices. The reduced phase space is $\mathcal{M} =
T^* SE(2)^n \times \R^{2 m}$, see VKM, \S~4.2, and the dynamics is given by a
reduced symplectic form $\sigma$ on $\mathcal{M}$ with kinetic energy as the
Hamiltonian. The following theorem formulates the starting point for the
derivations made in this paper.\\

\begin{theorem}
The dynamics of $n$ rigid bodies interacting with $m$ isolated point vortices
is a Hamiltonian system. The Hamiltonian is the
kinetic energy~\eqref{eqn:kinetic-energy}, and the phase space is
\begin{align*}\mathcal{M} = T^* \SE(2)^n \times \R^{2 m}.
\end{align*}
The cotangent bundle $T^* \SE(2)^n$ corresponds to the rigid body configuration,
and $\R^{2 m}$ is the phase space for $m$ point vortices. The symplectic form is
\begin{align*}
 \sigma = \sigma_{can} + \d
\alpha + \sigma_\gamma,
\end{align*}
where $\sigma_{can}$ is the canonical symplectic form
on the cotangent bundle $T^*\SE(2)^n$,  $\sigma_\gamma$ is the Kirillov-Kostant-Sariou form on the
coadjoint orbit $\R^{2 n}$, and and $\d \alpha$ is a magnetic
term, i.e., a two-form on $SE(2)^n \times \R^{2 m}$.
\end{theorem}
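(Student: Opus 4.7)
The plan is to follow the cotangent bundle reduction framework of Vankerschaver, Kanso, and Marsden (VKM), as sketched in the prose above the statement, and generalize it to $n$ bodies. First I would make the unreduced canonical system precise: on the configuration manifold $\mathcal{Q} = \SE(2)^n \times \mathcal{Q}^\Fld$, where $\mathcal{Q}^\Fld \subset \Emb_{\vol}(\Fld^0, \R^2)$ consists of volume-preserving embeddings whose boundary trace matches the current body poses $g \in \SE(2)^n$, the total kinetic energy (rigid body plus $L^2$-norm of the fluid velocity) defines a smooth, weakly Riemannian metric. Geodesic motion for this metric is equivalent to Hamilton's equations on $T^*\mathcal{Q}$ with the canonical symplectic form and kinetic energy as Hamiltonian; that gives the unreduced canonical Hamiltonian starting point.

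Next I would identify the relevant symmetry. The particle-relabeling group $G = \Diff_{\vol}(\Fld^0)$ acts on the right on $\mathcal{Q}^\Fld$ by precomposition and trivially on $\SE(2)^n$; it preserves the kinetic energy and so lifts to a cotangent action on $T^*\mathcal{Q}$ with equivariant momentum map $\mathbf{J}\colon T^*\mathcal{Q} \to \mathfrak{g}^*$. Following Arnold's identification, $\mathbf{J}$ is (a representative of) the vorticity two-form of the associated fluid velocity. The quotient $\mathcal{Q}/G$ is finite dimensional and equals $\SE(2)^n$, so that $\mathcal{Q} \to \SE(2)^n$ is a principal $G$-bundle; this is the Kaluza--Klein set-up used by VKM.

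I would then invoke the cotangent bundle reduction theorem (Marsden--Ratiu--Weinstein, Marsden 2007). Picking the coadjoint orbit $\mathcal{O}_\mu \subset \mathfrak{g}^*$ that corresponds to vorticity concentrated at $m$ point vortices with fixed strengths $K_i$ identifies $\mathcal{O}_\mu$ with $\R^{2m}$ carrying the Kirillov--Kostant--Souriau form $\sigma_\gamma = -\sum_i K_i \, \d x_i \wedge \d y_i$. The theorem then yields
\begin{equation*}
\mathbf{J}^{-1}(\mathcal{O}_\mu)/G \;\cong\; T^*\SE(2)^n \times \R^{2m} \;=:\; \mathcal{M},
\end{equation*}
with the reduced symplectic form of the stated shape $\sigma = \sigma_{can} + \d\alpha + \sigma_\gamma$, where $\d\alpha$ is (the pullback of) the curvature of any chosen principal connection on $\mathcal{Q} \to \SE(2)^n$; the Hamiltonian descends to the kinetic energy \eqref{eqn:kinetic-energy}. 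The extension from one disk to $n$ bodies of arbitrary shape enters only through the geometry of $\mathcal{Q}^\Fld$ and does not alter the structural conclusion of the reduction theorem.

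The hard part is the infinite-dimensional bookkeeping: showing that $\mathcal{Q}^\Fld$ is genuinely a smooth manifold with a proper, free $G$-action, verifying equivariance and regularity of $\mathbf{J}$, and, most importantly, justifying that fixing $\mathbf{J}$ on the point-vortex orbit selects exactly the fluid velocity fields prescribed by the Hodge--Helmholtz decomposition of Section 2. Once this identification is made, the explicit formula for $\alpha$ is irrelevant for the present statement --- it is pinned down in the main text by Lemma \ref{lemma:magnetic-symplectic-form} using asymptotic decoupling rather than by computing the mechanical connection here.
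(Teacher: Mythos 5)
Your proposal is correct and follows essentially the same route as the paper, whose own ``proof'' of this theorem is simply a citation to Vankerschaver--Kanso--Marsden \S 4 together with the remark that their arguments do not use the restriction to a single body; you have merely written out the VKM reduction argument (canonical system on $T^*\mathcal{Q}$, particle-relabeling symmetry, cotangent bundle reduction at the point-vortex coadjoint orbit) that the paper's appendix prose already sketches. Your closing observation that the explicit form of $\alpha$ is not needed here and is determined later by Lemma~\ref{lemma:magnetic-symplectic-form} matches the paper's logic exactly.
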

\begin{proof} This has been proven in VKS, \S 4. We emphasize here that the
proofs do not rely on the fact that only a single rigid body was considered.
\end{proof}

\section{The Cotangent Bundle of Euclidean Motions}
\label{app:cotangent-bundle}

In this section we consider the Lie group of Euclidean motions $\SE(2)$ and
denote the pairing between covectors and vectors by $\pair{., .}$. For any
covector $\mu \in T_g^* \SE(2)$ we can find a body momentum $M \in \R^3 \cong
\se^*(2)$ such that $\pair{\mu, \delta g} := \tdd{M, \Lambda}$, for any $\delta
g = g \Lambda$. Note that $\Theta(\delta \mu, \delta g) := \pair{\mu, \delta g}$
is a one-form on the contangent bundle $T^*\SE(2)$, and $\tdd{M, \Lambda}$ is
its push-forward to the \emph{left trivialization} $\R^3 \times \SE(2) \cong
T^*\SE(2)$. It is the \emph{canonical one-form}, and its exterior derivative
gives the canonical symplectic form on $T^*\SE(2)$. We will now compute the
symplectic form when pushed forward to the left trivialization $\R^3 \times
\SE(2)$, using the general formula for the exterior derivative of a one-form:
\begin{align}
\label{eqn:d-omega}
\d \Theta(X, Y) = \nabla_X \Theta(Y) - \nabla_Y \Theta(X) - \Theta([X, Y]).
\end{align}
Here $X$ and $Y$ are vector fields and $[X, Y]$ is the Jacobi-Lie bracket of $X$
and $Y$. Consider a two-parameter family $(M(s,t), g(s,t))$ in $\R^3 \times
\SE(2)$, whose partial derivatives (denoted by $\delta$ and $'$, respectively)
commute. The vector fields will be $X = (\delta M, \delta g)$ and $Y = (M',
g')$, where $\delta g = g \Gamma$ and $g' = g \Xi$ with $\Xi = (\Omega, V)$. One
can check that the partial derivatives of $g$ commute if and only if
\begin{align}
\label{eqn:velocity-integrability}
\Gamma' & = \delta \Xi - \ad{\Xi} \Gamma, & \ad{\Xi} & = \mat{0 & 0 \\ V \times &
\Omega \times}.
\end{align}
The commuting partial derivatives ensure that the Jacobi-Lie bracket
in~\eqref{eqn:d-omega} vanishes. The covariant derivatives are usual directional
derivatives here, so we obtain the canonical symplectic two--form $\sigma = \d
\Theta$ in the left-trivialization as
\begin{equation}
\label{eqn:canonical-symplectic-form}
\sigma\left((\delta M, \Gamma), (M', \Xi)\right) =
\delta{\dd{M, \Xi}} - \dd{M, \Gamma}' 
 = \dd{\delta M, \Xi} - \dd{M' - \coad{\Xi} M.
\Gamma},
\end{equation}
Here $\coad{\Xi}$ is the matrix transpose of $\ad{\Xi}$:
\begin{align}
\label{eqn:algebra-coadjoint-operator}
\coad{\Xi} = - \mat{0 & V \times \\ 0 & \Omega \times}.
\end{align}

\bibliographystyle{apalike}
\bibliography{multibody2d}

\end{document}